\newcommand{\properties}{Platform Influence Functions Properties}
\renewcommand{\epsilon}{\varepsilon}
\newcommand{\opiniondynamic}{opinion dynamic}
\begin{document}

\pagestyle{plain}  
\mainmatter
\title{Harmonizing vs Polarizing Platform Influence Functions} 
\titlerunning{HPPIF}  
\author{Hind AlMahmoud , Frederik Mallmann-trenn\thanks{Part of the research was supported by the EPSRC grant EP/W005573/1.}}
\authorrunning{Anonymous et al.} %

\institute{King's College London\\
\email{hind.almahmoud@kcl.ac.uk,frederik.mallmann-trenn@kcl.ac.uk},\\}

\maketitle              

\begin{abstract}

We investigate the dynamics of opinion formation on social networking platforms, focusing on how individual opinions, influenced by both social connections and platform algorithms, evolve. We model this process using a differential equation, accounting for both peer influence and the platform's content curation based on user preferences. Our primary aim is to analyze how these factors contribute to opinion polarization and identify potential strategies for its mitigation. We explore the conditions under which opinions converge to a consensus or remain polarized, emphasizing the role of the platform's influence function. Our findings in two-agent, complete graphs, and stochastic block model provide insights into the impact of social media algorithms on public discourse and offer a framework for understanding how polarization can be avoided.
\keywords{Persistent disagreement, Equilibrium, Consensus, Polarization, Social learning, Opinion dynamics}
\end{abstract}

\section{Introduction}\label{sec1}

Research shows that platforms and media can significantly influence people's opinions. For instance, 14\% of Americans reported changing their views due to something they saw on social media, with younger users being more affected \cite{pew2018}, which highlights the influential role of media. Furthermore, social networks play a critical role in shaping political decisions and disseminating opinions, as demonstrated during events like the 2008 U.S. presidential election \cite{burbach2020opinion}. As these platforms have become integral to our daily interactions, the content they present—often tailored to individual user preferences—can substantially shape our views. This customization tends to favor content that aligns with users' existing beliefs, which can potentially lead to increased polarization. In this paper, we examine the impact of social networking platforms, such as Instagram and Facebook, on the formation of opinions and beliefs.
Our research aims to deepen the understanding of this polarization process and explore potential avenues for its mitigation. 
We approach this by modeling a network of users, where each individual at any given time \( t \) has an opinion \( x_i(t) \) within a range of \([-1,1]\). Each opinion evolves based on an equation that considers both the influence of others and the platform's algorithmic impact, represented as:
\begin{equation}\label{eq:maineq}
    \dot{x_i}(t) =  \sum_{j \neq i}^{n} a_{i j}(x_j(t) - x_i(t)) + b \cdot (f(x_i(t)) - x_i(t)),
\end{equation}
where \( f(\cdot) \) is the platform's influence function since platforms like Facebook display content (posts) using algorithms that typically align with the user's existing opinions and preferences\cite{garcia2023effect}, $n$ is the number of users and \( b \) is a positive constant.
We denote the vector of functions $x$ as an \emph{\opiniondynamic} if it starts at a feasible point (in $[-1,1]^n$) and then evolves as described by the ODE \eqref{eq:maineq}.
This model was introduced by \cite{candogan2022social}, who studied $f(\cdot)$ for the sign function. The influence parameter \(a_{ij} \geq 0\) can take any value greater than zero. For simplicity, we assume that \(a_{ij} = a_{ji}\), i.e., the matrix \(A = (a_{ij})\) is symmetric.

Our paper builds on existing models to analyze how the platform function affects opinion dynamics, particularly focusing on the system's convergence to stable states and the potential of the platform function to either exacerbate or mitigate polarization. There are two types of final states persistent disagreement states in which the agents will never reach consensus, with their opinion unchanging, and consensus, where agents align on a common opinion, meaning their final values share the same sign. Strong consensus is a more specific case where all agents not only agree in general but hold exactly the same opinion.

\subsection{Contributions}
Under mild assumptions on $f(\cdot)$, such as continuity, symmetry, sign preservation, and boundedness, (see Section~\ref{sec:properties}) we give sufficient and necessary conditions for persistent disagreement states. See Theorem~\ref{thm:B},~\ref{thm:C}, and~\ref{thm:BN}. In addition, we characterize all the states where strong consensus is reached in Theorem~\ref{thm:A} and~\ref{thm:AN}. Also, we generalise our results to stochastic block model of two blocks as seen in section~\ref{sec:SBM}. 
Finally, we analyse a set of harmonizing and polarizing functions in Section~\ref{sec:specificResults},  where we show for a function $f(\cdot)$ that violate sign preservation, that the system always converges to consensus see Theorem ~\ref{thm:antiSignFunction}. 

\subsection{Related Literature}
Over the past fifty years, the development of mathematical models to analyze social learning and opinion dynamics has attracted significant attention. Most of the work is based on DeGroot model~\cite{degroot1974reaching} and Friedkin-Johnsen (FJ) model~\cite{friedkin1990social}. The DeGroot model assumes that individuals form their opinions by iteratively updating their opinions based on the opinions of their neighbors. The iterations continue until the opinions converge to a steady state. The FJ model is a more nuanced model that takes into account not only the opinions of the neighbors but also the influence of external information sources. In the FJ model, individuals have both confidence and susceptibility parameters. The confidence parameter determines how strongly an individual's own opinion influences their neighbors, while the susceptibility parameter determines how strongly an individual's opinion is influenced by their neighbors. While both models rely on iterative opinion updates, the FJ model introduces additional parameters that allow for the incorporation of external information sources and more nuanced individual opinions.

Several social theories have been applied to the analysis of opinion dynamics, including theories such as the herd behaviour, informational cascade, and homophily. ~\cite{banerjee1992simple},~\cite{bikhchandani1992theory},~\cite{dandekar2013biased}. In~\cite{dandekar2013biased}, they generalizes DeGroot’s model to include biased assimilation. Their results show that homophilous networks lead to polarization only if individuals have a strong bias. Therefore, homophily by itself is not sufficient to polarize society. 

Recently more attention was given to the media and how it affects opinion dynamics and polarization. In~\cite{perra2019modelling} they studied how individuals revise opinions in social networks under algorithmic personalization by introducing a model examining the effects of different filtering algorithms—random, time-based, and semantic filtering—on opinion dynamics among interconnected users.  A research more closely aligned with our paper is~\cite{candogan2022social},  the continuous model in this paper bears resemblance to that of~\cite{candogan2022social} and draws inspiration from it. Our results deviate from theirs in that we offer findings for broad categories of platform functions that are harmonizing and polarizing. We also provide specific examples of functions for which we have pinpointed all potential steady states. 
Also, two recent research includes the media in opinion dynamics in~\cite{auletta2023impact} they investigate how social media recommendations affect opinion dynamics, agents are assigned discrete opinions. Their results show that strong social media influence can lead to polarization, while a high homophily can reduce the impact of social media, which reduces consensus probability. In contrast,~\cite{HU2024128976} focuses on continuous opinions, where agents hold opinions evolve in real numbers. 

While there is a continuing discussion on whether the Internet and social media amplify or mitigate polarization. A study by~\cite{allcott2020welfare} supports the idea that social media platforms, particularly Facebook, may contribute to polarization.  Additionally,~\cite{lelkes2017hostile} shows that internet access may also contribute to an increase in polarization.

Researchers have raised concerns regarding the potential role of social media in driving political polarization such as~\cite{boxell2017greater,levy2021social,flamino2023political,falkenberg2022growing,garcia2023influence,tornberg2021modeling}.
In~\cite{boxell2017greater} they examine polarization using eight different measures to create an index of political polarization among U.S. adults, while in ~\cite{levy2021social}, the study conducted a large field experiment offering subscriptions to either conservative or liberal news outlets on Facebook, examining the impact on different aspects such as news consumption, attitudes, and political opinions. Their results show that exposure to ideologically congruent news influences the slant of visited news sites and decreases negative attitudes towards the opposing party, but does not impact political opinions; However, Facebook's algorithm tends to reduce exposure to opposing views which potentially increasing polarization. Similarly, in~\cite{flamino2023political} they analyze Twitter data to document changes in Twitter’s political news media landscape and measure the resulting polarization induced by social media influencers and their audiences. Also, in~\cite{falkenberg2022growing} they analyzed Twitter data to analyze the polarization over climate change topics. Their results show the importance of analysing the polarisation and its effects on people.

The authors of~\cite{garcia2023influence,tornberg2021modeling}  analyse the polarization in social media. In~\cite{garcia2023influence} they examine the impact of Facebook's news feed algorithms on political polarization,while in~\cite{tornberg2021modeling} they propose a model to understand the emergence of affective polarization in social media. It examines the notions of "echo chambers" and "filter bubbles," discussing the role of social media and group affiliation in human psychology in polarization. In~\cite{azzimonti2018social} they investigated the role of social media network structures and the existence of fake news on misinformation and polarization in society.  One of the findings is how significant misinformation and polarization can arise if only 15\% agents believe fake news to be true. Also, in ~\cite{baumann2020modeling}  they model the dynamics of radicalization and the formation of echo chambers, showing how user engagement and opinions can result in more extreme views. The results emphasize the reasons that may contribute to the existence of echo chambers and polarization in social media networks. 

The papers~\cite{bakshy2015exposure},~\cite{levy2021social}, and~\cite{flaxman2016filter} all support the idea that social media platforms can increase political polarization. In addition, in~\cite{amelkin2017polar} the authors studied how polarized opinions spread through social networks. They used differential equations to describe the dynamics of opinion change over time and proved that opinions converge in the limit using mathematical tools such as Lyapunov function and LaSalle's invariance principle.  In contrast to the aforementioned research, the analysis in~\cite{boxell2017greater}  reveals an opposite result which is that polarization has grown the most in demographic groups that use the internet and social media the least which means the Internet and social media may lead to decreased polarization. 

Although the research community has extensively studied polarization. However, there is still much to learn and many unanswered questions to explore, and ongoing research in this field is essential for developing a deeper understanding of polarization and how to mitigate its negative effects.

\subsection{Outline}
The structure of this paper is as follows: Section~\ref{sec:Model} introduces the model, detailing the classification of platform functions. The results are divided into four parts and presented in Sections~\ref{sec:generalResultsTwo-Agent},\ref{sec:generalResultsCompleteGraph},~\ref{sec:SBM}, and~\ref{sec:specificResults}. In Sections~\ref{sec:generalResultsTwo-Agent} and\ref{sec:generalResultsCompleteGraph}, we examine the outcomes for general platform functions in a two-agent system and complete graphs. In Section~\ref{sec:SBM} we generalise our results to stochastic block model of two blocks, and Section~\ref{sec:specificResults} is dedicated to the results for specific platform functions. The paper concludes with Section~\ref{sec:Conclusion}, where we summarize our findings and discuss potential future work.

\section{Model and Notation}\label{sec:Model}
For $k \in \mathbb{N}$, let $[k]=\{1,2, \dots k \}$.
Let  $V$ be the set of all agents/vertices and $n=|V|$.
Throughout the paper we assume all agents are connected, i.e., in a clique except in Section~\ref{sec:SBM}.  The influence matrix of this graph is represented by a matrix \( A \). The \((i, j)\)-th entry \( a_{ij} \geq 0 \) in \( A \) denotes the influence of agent \( j \)'s on agent \( i \). In this paper, we assume that the matrix \( A \) is symmetric, meaning \( a_{ij} = a_{ji} \).

We use the model proposed by~\cite{candogan2022social}, where agents update their opinion based on their neighbours' opinions and the platform's influence.In this model, given a set of $n$ agents, the opinion of each agent \( i\in [n] \) at time \( t \geq 0 \) is denoted by \( x_i(t) \in [-1, 1] \). 

Here, the domain of \( x_i(t) \) signifies that the opinions can range from strongly negative (-1) to strongly positive (+1), encapsulating all possible opinion states. We assume that we have $n$ agents, and every agent is connected to all the other agents. 
In this paper we consider the system of two agents, which can also be interpreted as two cliques connected (where each agent inside a clique/community has the same initial value), complete graphs, and stochastic block model of two blocks.

To ensure that the opinions do not exceed the bounds of \([-1,1]\), the platform functions are designed to return values within this range only. Each agent’s opinion is influenced by two forces, by their connections and the platform function. Let \( b > 0 \) present the strength of the platform's influence. The opinions are posited to evolve according to \eqref{eq:maineq},and in the case of the two agent system. 
For $i,j\in \{1,2\}$ with $j\neq i$, the differential equation is given by
\begin{equation}\label{eq:difeq}
    \dot{x_i}(t) = a_{i j}(x_j(t) - x_i(t)) + b \cdot (f(x_i) - x_i(t))
\end{equation}
where \( \dot{x_i}(t) \) indicates the rate of change of \( x_i \) over time, and \( f: [-1,1] \rightarrow [-1,1] \) is the function employed by the platform to curate content for the user.

The opinion dynamics presented in  \eqref{eq:maineq} suggests that shifts in an agent's stance are influenced in two ways: the agent's social interactions and the platform's curated content. The first term embodies the continuous social learning akin to the naive (DeGroot-like) model~\cite{degroot1974reaching}, while the second term encapsulates the platform's force, determined by the strength of the platform's influence, \( b \), and the deviation between the agent's current opinion and the platform's targeted content.

We now give the key definitions we require for our results.

\begin{definition}[Equilibrium State or Steady State] 
We say an \opiniondynamic{} $x$ is in equilibrium at time $t^*$ if for all $t \geq t^*$ and all agents $i$, it holds that
$\dot x_i(t) = 0$. A vector or state $(y_1, y_2, \ldots, y_n)$ is an equilibrium, if there is some \opiniondynamic{} that attains its equilibrium in this state. 
\end{definition} 

Note that our definition, as well as the definition of \opiniondynamic, imply that a point $(y_1, y_2, \ldots, y_n)$ is an equilibrium or steady state if an \opiniondynamic{} initiated at this point—specifically, $x_1(0) = y_1$, $x_2(0) = y_2$, \ldots, $x_n(0) = y_n$ remains constant over time.

We divide these equilibria  into two categories:
\begin{definition}[(Strong) Consensus]
The system is in consensus at time $t$ 
if it is in a equilibrium and 
for all $i\in V$, $x_i(t)> 0$,  
for all $i\in V$, $x_i(t)< 0$, or for all $i\in V$, $x_i(t)= 0$.
Again, we call a point a state of consensus if starting at an \opiniondynamic{} in this state, results in consensus for all $t\geq 0$. 
Furthermore, we say the system is in \emph{strong consensus} at time $t$ if it is in consensus at time $t$ and $\forall i, j \in [n], x_i(t) = x_j(t)$, indicating all agents in the graph have the same value.
\end{definition}

Observe that, $(0,0, \ldots,0)$ is always a strong consensus state, due to $f(0)=0$ (which is a consequence of the assumptions we make regarding \properties).

\begin{definition}[Persistent Disagreement]
The system is in persistent disagreement (PD) at time $t$ 
if it is in a equilibrium and 
there exists $i,j\in V$ such that $x_i(t) > 0$ and $x_j(t) < 0$. We will sometimes refer to such states as being polarized.
We say a point is in persistent disagreement if starting an \opiniondynamic{} at it results in a state of persistent disagreement for all $t\geq 0$. 
\end{definition}

As we  will see later, the existence of consensus states and persistent disagreement states are determined by the platform influence function.
We explore a spectrum of platform influence functions, which we classify into two distinct categories based on their dynamic impact on the system:

\begin{definition}[Harmonizing Platform Influence Functions-HPIF]\label{def:harmonizing}
We say a platform influence function $f(\cdot)$ is harmonizing,  if applying this function over time leads to a state where there is no persistent disagreement state.
\end{definition}

\begin{definition}[Polarizing Platform Influence Functions-PPIF]\label{def:polarizing}
We say a platform influence function $f(\cdot)$ is polarizing if applying this function over time leads to at least one persistent disagreement state.
\end{definition}

\subsection{\properties}\label{sec:properties}
We will assume that all platform influence functions adhere to the following properties:

\begin{enumerate}
    \item \textbf{Continuity:} Each function is continuous and Lipschitz continuous, ensuring a smooth transition of influence across the range of values.
    \item \textbf{Symmetry:} They exhibit symmetric behavior about the origin, governed by the property \( f(-x) = -f(x) \).
    \item \textbf{Sign Preservation:} The functions return  a positive value for positive inputs, and a negative value for negative inputs.
    \item \textbf{Boundedness:} To ensure that opinions remain within the range of possible bias \([-1, 1]\), we require that $f(\cdot) \in [-1,1]$.
\end{enumerate}

As we denote a function fulfilling properties 1 to 4 a \emph{feasible function}.
Examples of this family include, but are not limited to: the 
$sgn_\epsilon$ function, see \eqref{eq:signFunction}, and linear functions $f(x)=\alpha x$ for $\alpha <1$ (see \eqref{eq:LinearHarmonizingFunction}).

\subsection{Polarization Planes in Two Agent System}
\label{sec:Polarization_planes_2agents}
It is well-known\cite{pouso2012peano} that for continuous $f$ a solution to the differential equation exists, and according to Theorem 3 in\cite{candogan2022social} there exists an equilibrium that agents converge to it in the limit if the influence matrix $A$ is symmetric. To answer the question of whether the equilibrium is a state of consensus or persistent disagreement in the case of two agent system, we introduce the following concept 
of polarization planes. We define a polarization plane to be a map from $ [-1,1] \times [-1,1] \times (-\infty, \infty) \rightarrow ( -\infty, \infty)$ and then you say, let $G$ and $H$ be two polarization planes to be \emph{polarization planes}, which we will use to delineate critical thresholds in the opinion dynamics model.
As we will see later, they define equilibrium states where the gradient of the system's potential is zero.
For simplicity in explaining the planes we assume that \( a_{ij} = a_{ji}=1 \). 

Based on the ordinary differential equation presented in  \eqref{eq:difeq},
The system will be in an equilibrium if and only if for all $i,j \in [2]$ with $j\neq i$, \( f(x_i(t)) = \frac{x_i(t) - x_j(t)}{b} + x_i(t) \).
We define the polarization planes, to include all points $(x,y)$ that satisfy $f(x)=\frac{x-y}{b} + x$ and $f(y)=\frac{y-x}{b}+y$. 
These points can be represented as three-dimensional planes, namely \( (x, y, \frac{x-y}{b} + x) \) and \( (x, y, \frac{y-x}{b} + y) \). 
See Figure~\ref{fig:SignFunctionWithIntersectionPoints} for an illustration.
We define,
\( G(x,y,b) = \frac{x - y}{b} + x \text{\phantom{x}  and \phantom{x} } H(x,y,b) = \frac{y - x}{b} + y . \)
Similarly, we can define the plane of values to be \(F_1 = \{ (x,y,f(x))~|~ x,y \in [-1,1] \} \text{ and}\)
\(F_2 = \{ (x,y,f(y))~|~ x,y \in [-1,1]\}. \phantom{and}\)The following lemma provides some key insights into the polarization planes.

\begin{lemma}\label{lem:plane}
Consider the system at time $t$.
Let $z=x_1(t)$ and $y=x_2(t)$.
If $(z,y)$ is an equilibrium, it must be that  $F_1$ intersects with $G$ at $(z,y)$ and, simultaneously, $F_2$ intersects with $H$ at $(z,y)$. Moreover, we have that if $f(z) > G(z,y,b)$, then $\dot x_1(t) > 0$. 
Conversely, if  $f(z) < G(z,y,b)$, then $\dot x_1(t) < 0$. Analogue statements hold for $f(y)$ and $H$. 
\end{lemma}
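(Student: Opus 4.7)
\emph{Proof sketch.} The plan is to prove both halves of the lemma by directly rewriting the two-agent ODE \eqref{eq:difeq} (with $a_{ij}=1$) so that the planes $G$ and $H$ appear explicitly on the right-hand side, after which every claim becomes a one-line consequence.

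First, with $z=x_1(t)$ and $y=x_2(t)$, I would rearrange the right-hand side $(y-z)+b\bigl(f(z)-z\bigr)$ of $\dot x_1(t)$ by factoring out $b$ and using the definition $G(z,y,b)=\tfrac{z-y}{b}+z$ to obtain the identity
\[
    \dot x_1(t) \;=\; b\bigl(f(z)-G(z,y,b)\bigr),
\]
and symmetrically $\dot x_2(t) = b\bigl(f(y)-H(z,y,b)\bigr)$. This rewriting is really the only calculation the proof needs; it is pure arithmetic.

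With the identity in hand, the equilibrium part follows because $\dot x_1(t)=\dot x_2(t)=0$, together with $b>0$, forces $f(z)=G(z,y,b)$ and $f(y)=H(z,y,b)$. Geometrically, these equations say precisely that the graph $F_1$ agrees with the plane $G$ in its third coordinate at $(z,y)$, and likewise for $F_2$ and $H$, which is the intersection statement. For the sign claims, since $b>0$ the sign of $\dot x_1(t)$ coincides with the sign of $f(z)-G(z,y,b)$, which yields both the $>$ and $<$ directions at once, and the symmetric identity handles $f(y)$ and $H$.

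Because the lemma is essentially a change of variables on the right-hand side of \eqref{eq:difeq}, I do not expect any substantive obstacle. The only point I would flag in writing it up is the geometric reading of the word \emph{intersects}: I would make explicit that ``$F_1$ intersects $G$ at $(z,y)$'' is to be interpreted as $F_1$ and $G$, viewed as graphs over the square $[-1,1]^2$, sharing the same third coordinate there, so that the intersection condition is literally the same equation as the equilibrium condition for agent~$1$.
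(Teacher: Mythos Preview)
Your proposal is correct and follows essentially the same approach as the paper: both arguments derive the identity $\dot x_1(t)=b\bigl(f(z)-G(z,y,b)\bigr)$ by direct algebraic manipulation of the two-agent ODE, and then read off the equilibrium and sign claims from the positivity of $b$. The paper phrases this via the auxiliary variable $\epsilon:=f(z)-G(z,y,b)$, but the content is identical to what you wrote.
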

\begin{proof}
Let $\epsilon  = f(z)- G(z,y,b)$.
We have,
$\dot x_1(t) = a (y-z) + b(f(z)-z)= a (y-z) + b(G(z,y,b)+\epsilon-z)=a (y-z) + b(\frac{a (z-y) }{b}+z+\epsilon-z)=b \epsilon$. Thus, only for $\epsilon=0$ the derivative is zero, for $\epsilon>0$ the derivative is positive, etc.
\end{proof}
See Figures~\ref{fig:SignFunctionWithIntersectionPoints}, ~\ref{fig:SignFunctionWithIntersectionPoints2}, ~\ref{fig:SignFunctionWithIntersectionPoints3}, ~\ref{fig:SignFuncIntersection-1}, ~\ref{fig:SignFuncIntersection-2},
~\ref{fig:LinearHarmonizingFunction}, and ~\ref{fig:LinearHarmonizingFunction2} for  illustration. 
Since persistent disagreement can only occur in the quadrants $(+, -)$ (meaning that $x_1(t)>0$ and $x_2(t)<0$) and $(-, +)$, it is sufficient for us to focus on $(+, -)$, due to symmetry: The points in quadrant $(+, -)$ can be mapped to points in quadrant $(-, +)$ by reflecting them across the origin.

The underlying symmetry also allows us to prove that the rotational matrix  \[R=
\begin{pmatrix}
0 & -1 & 0 \\
-1 & 0 & 0 \\
0 & 0 & -1
\end{pmatrix}\]  defines a bijection between $F_1$ and $F_2$ and, at the same time,  between $G$ and $H$. See Figure~\ref{fig:SignFunctionWithIntersectionPoints}, ~\ref{fig:SignFunctionWithIntersectionPoints2}, ~\ref{fig:SignFunctionWithIntersectionPoints3}, ~\ref{fig:SignFuncIntersection-1}, ~\ref{fig:SignFuncIntersection-2},
~\ref{fig:LinearHarmonizingFunction}, and ~\ref{fig:LinearHarmonizingFunction2} for an illustration. 

\begin{figure}[ht]
    \centering
    \begin{minipage}[b]{0.32\textwidth} 
        \includegraphics[width=\textwidth]{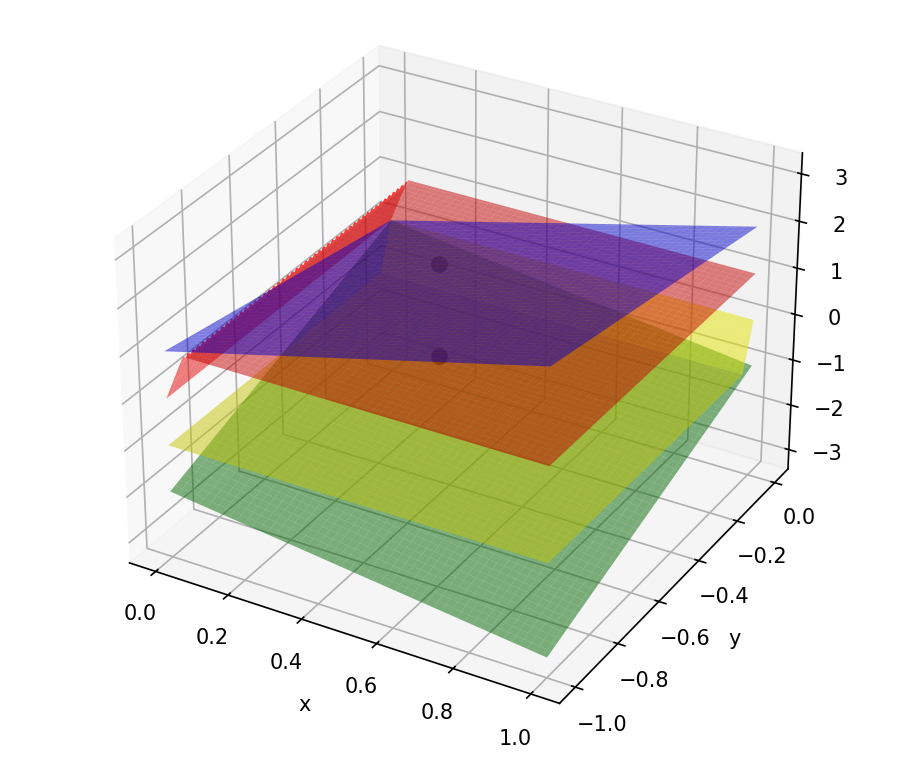}
        \caption{A depiction of the sign function along with the intersection points and polarization planes: \(G\) (purple), \(F_1\) (red), \(F_2\) (yellow) and \(H\) (green).}
        \label{fig:SignFunctionWithIntersectionPoints}
    \end{minipage}
    \hfill 
    \begin{minipage}[b]{0.32\textwidth}
        \includegraphics[width=\textwidth]{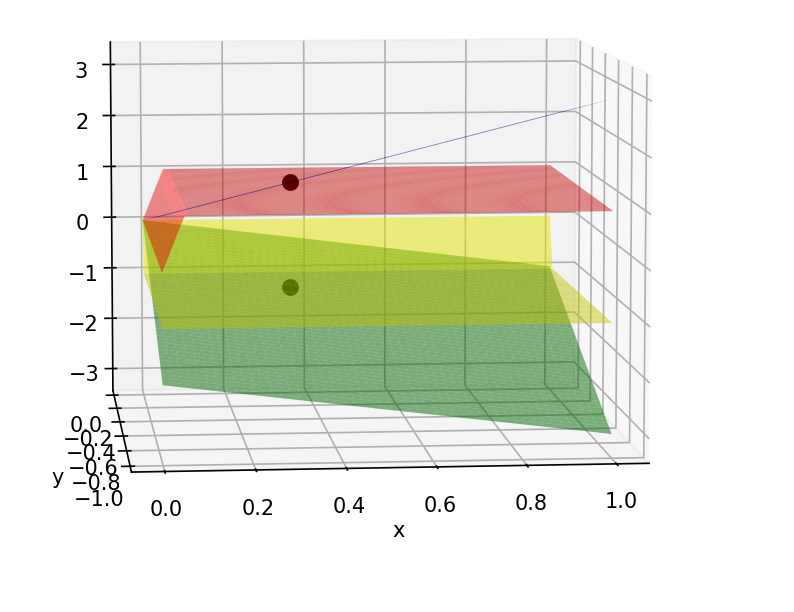}
        \caption{A second viewing angle of Figure~\ref{fig:SignFunctionWithIntersectionPoints} highlighting the intersection point of \(F_1\) (red) and the polarization plane: \(G\) (purple).}
        \label{fig:SignFunctionWithIntersectionPoints2}
    \end{minipage}
    \hfill 
    \begin{minipage}[b]{0.32\textwidth}
        \includegraphics[width=\textwidth]{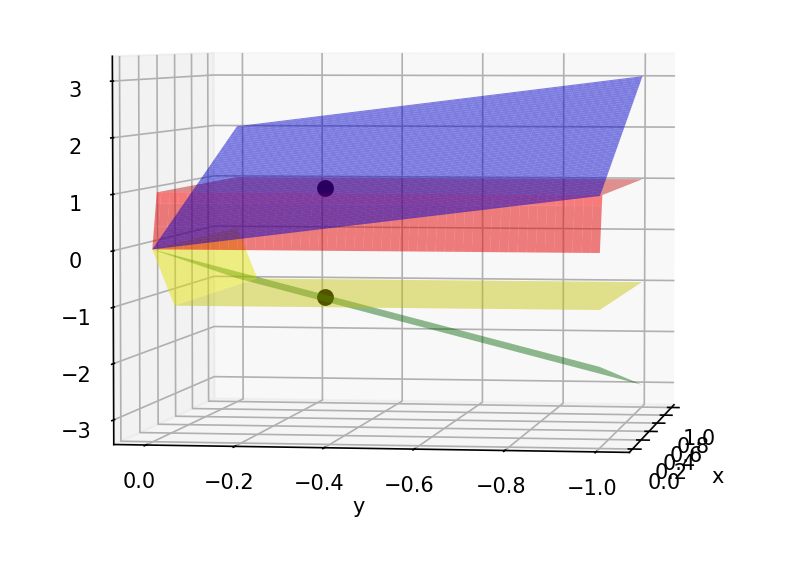}
        \caption{A third viewing angle of Figure~\ref{fig:SignFunctionWithIntersectionPoints} highlighting the intersection point of \(F_2\) (yellow) and the polarization plane: \(H\) (green).}
        \label{fig:SignFunctionWithIntersectionPoints3}
    \end{minipage}
\end{figure}

 \subsubsection{Persistent Disagreement Results}
Using the polarization planes and platform function planes we can summarize our results for persistent disagreement into the following three cases.
Due to the symmetry mentioned before, we will only focus on the octant where $x>0, y<0$, and $f(x)\geq 0$. 
\begin{itemize}
    \item The plane $F_1$ intersects the $G$ plane at least at one point, i.e., $F_1 \cap G \neq \emptyset$.
    Due to symmetry, $F_2$ must also intersect the $H$ plane. Then,  there exists at least one persistent disagreement state. See Theorem~\ref{thm:B}.
    \item The entire plane $F_1$ is located below the $G$ plane and, due to its symmetry, $F_2$ is above $H$.     In this case, there is no persistent disagreement state see Theorem~\ref{thm:C}. 
    \item The entire plane $F_1$ is located above the $G$, i.e., for all $x >0 $ and $y<0$, $f(x) \geq \frac{x-y}{b} + x$. Due to symmetry, $F_2$ must be completely below the $H$ plane. 
    This cannot occur, because there is no $F_1$ completely above $G$ since the platform function is, by definition, bounded by $1$ while the plane $G$ exceeds 1.
\end{itemize}
For item 1, we call this function the polarizing function as we will elaborate in Section \ref{sec:PPIF}. For item 2, we call this function the harmonizing function as we will elaborate in Section \ref{sec:HPIF}.
\subsection{Polarization Conditions in Complete Graphs}
\label{sec:Polarization_planes_Nagents}
The concept of polarization planes can be extended for complete graphs with \( n \) vertices which we denote as \( K_n \). Since we have complete graph, then each agent $i$ has a fixed number of connections $n-1$ connections. Therefore, the weight of each connection $\frac{a}{n-1}$, i.e., $a_{ij} = \frac{a}{n-1}$ is the influence of agent $j$ on agent $i$. Hence, the opinions of agent $i$ evolve according to the equation
\[\dot{x}_{i}(t)=\frac{a}{n-1} \sum_{j \in N(i)}\left(x_{j}(t)-x_{i}(t)\right)+b\left(f\left(x_{i}(t)\right)-x_{i}(t)\right)
\]
where $n$ is the total number of agents in the network, $a >0$ is the influence parameter. In the equilibrium where \(\dot{x}_{i}(t) =0\) we have \( f(x_i) = \frac{(a + b)x_i(t) - a \bar{x}_i(t)}{b}\). 
where  \( \bar{x}_i(t) \) is the mean of all the neighboring nodes of node \( i \) in the complete graph (excluding \( i \)), it can be written as:
\(
\bar{x}_i(t) = \frac{1}{n-1} \sum_{\substack{j \in N(i)}} x_j(t)
\)
. To be consistent with the two agent system lets call this function G such that:
\[G(x_i(t),\bar{x_i}(t),a,b) = \frac{(a + b)x_i(t) - a \bar{x}_i(t)}{b}
\label{eq:G_Nagents}
\] If there exists a time $t^* \geq 0$ such that $G(x_i(t^*),\bar{x_i}(t^*),a, b) = f(x_i(t^*))$, for every agent $i$ in the state vector \(\mathbf{x} = (x_1(t), x_2(t), \ldots, x_n(t))\) of all agents in the graph, then all agents in $K_n$ have reached equilibrium.
We can define the plane of values as
\(F_i = \{ (x_i(t), \bar{x_i}(t), f(x_i(t))) \mid x_i(t), \bar{x_i}(t) \in [-1,1] \},\) 
The lemma \ref{lem:planes_complete} provides some key insights into the polarization planes in complete graphs.

\section{Results for General Platform Functions for Two-Agent System}
\label{sec:generalResultsTwo-Agent}
In this section we present our results of characterizing necessary and sufficient conditions for strong consensus states and persistent disagreement states in the case of the two-agent system.
\begin{theorem}[Sufficient and Necessary Conditions for Strong Consensus]\label{thm:A}
Fix arbitrary $b >0$ and $a >0$, and consider an \opiniondynamic{} with two agents $i$ and $j$. Assume $f(\cdot)$ fulfills the \properties{}. Then,  for any strong consensus state $(y,y)$ it is sufficient and necessary that $f(y) = y$. 
\end{theorem}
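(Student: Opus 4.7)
The plan is to reduce the question to a direct computation at the state $(y,y)$ and then invoke uniqueness of ODE solutions for the sufficiency direction. The theorem has two directions—necessity and sufficiency—so I would treat them separately after setting up the same basic substitution.

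For necessity, I would assume that $(y,y)$ is a strong consensus state, meaning some opinion dynamic $x$ attains equilibrium at $(y,y)$. By the definition of equilibrium, $\dot x_i(t) = 0$ for both agents once the state is $(y,y)$. Plugging $x_1(t) = x_2(t) = y$ into the two-agent differential equation \eqref{eq:difeq}, the social term $a_{ij}(x_j(t) - x_i(t))$ collapses to $0$, leaving $\dot x_i(t) = b\,(f(y) - y)$. Since $b > 0$, the only way for this to vanish is $f(y) = y$. This gives the necessary condition.

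For sufficiency, I would assume $f(y) = y$ and show that the constant trajectory $x_1(t) = x_2(t) = y$ is a valid opinion dynamic satisfying the definition of strong consensus. Substituting into \eqref{eq:difeq} yields $\dot x_i(t) = 0$ identically, so the constant trajectory is a solution of the ODE with initial condition $(y,y)$. Because $f$ is Lipschitz continuous by the \properties{} (Property~1), standard Picard–Lindelöf uniqueness applies to the right-hand side of \eqref{eq:difeq} (which is Lipschitz as a sum of a linear map and a Lipschitz function), so this constant trajectory is the unique solution starting at $(y,y)$. Hence the opinion dynamic remains at $(y,y)$ for all $t \geq 0$, satisfying both the equilibrium condition and $x_1(t) = x_2(t)$, which is the definition of strong consensus.

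There is no genuine obstacle here: the social term cancels because both agents share the value $y$, which isolates the platform term $b(f(y) - y)$ and makes the fixed-point condition $f(y) = y$ immediate. The only subtle point is the sufficiency direction, where one must explicitly note that being a momentary equilibrium implies remaining in that state for all $t \geq 0$—this is precisely where the Lipschitz assumption on $f$ is needed, to invoke uniqueness. I would state this briefly rather than develop it in detail, since it is a standard appeal to Picard–Lindelöf.
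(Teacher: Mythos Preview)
Your proposal is correct and follows the same direct-substitution idea as the paper: at the diagonal state $(y,y)$ the social term vanishes and only $b(f(y)-y)$ survives. In fact your argument is more complete than the paper's, which explicitly handles only the sufficiency direction (and additionally notes that $(-y,-y)$ and $(0,0)$ are consensus states by the symmetry property), while you also spell out necessity and invoke Picard--Lindel\"of to justify that $\dot x_i(0)=0$ forces the trajectory to remain constant---a point the paper leaves implicit via its remark following the equilibrium definition.
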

\begin{proof}
By symmetry properties from $f(y)=y$ we obtain $f(-y)=-y$. Moreover, recall that $f(0)=0$ due to the symmetry assumption. Let $\bar y \in \{-y,0,y\}$ and consider an \opiniondynamic{} $x$ starting in $(\bar y,\bar y)$. Then $ x_i(0)=x_j(0)=f(x_i(0))=f(x_j(0)))=\bar y$ by definition. 
Thus, $\dot x_i(0)= a (\bar y -\bar y)  + b(\bar y - \bar y) =0$ which is analogue for $j$. This shows that all the points $(y,y)$ $(-y,-y)$ and $(0,0)$ are equilibrium states. By definition, they are also consensus states.
\end{proof}
Note that this characterization is only for states of strong consensus.
There may be other consensus states depending on the platform function, for example, $(\frac12, \frac34)$ if $f(\frac12)=\frac14$, $f(\frac34)=1$, and $b =1$. However, a necessary and sufficient condition for consensus to occur is naturally that $F_1$ and $G$ intersect at the same $x$ and $y$ coordinates as $F_2$ 
intersect with $H$.
Observe that Theorem implies that there is at least one consensus state in the quadrant $(+,+)$ and one in the quadrant $(-,-)$.
\begin{theorem}[Sufficient Condition for PD] \label{thm:B}
Fix arbitrary $b >0$ and $a >0$, and consider an \opiniondynamic{} $x$ with two agents $i$ and $j$. Assume $f(\cdot)$ fulfills the \properties. If there exist $z \in (0,1]$ and $y \in [-1,0)$ such that \( f(z) \geq G(z, y, a,b) \) and \( f(y) \leq H(z, y,a, b) \), then there exists at least one persistent disagreement state. 
\end{theorem}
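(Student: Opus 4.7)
The plan is to produce a persistent disagreement equilibrium of the \emph{symmetric} form $(z^*,-z^*)$ with $z^*>0$, which side-steps having to find two coordinates independently. Writing $g_1(u,v):=f(u)-G(u,v,a,b)$ and $g_2(u,v):=f(v)-H(u,v,a,b)$, Lemma~\ref{lem:plane} (in its general-$a$ version) says $\dot x_1=b\,g_1$ and $\dot x_2=b\,g_2$. A short calculation using the symmetry $f(-u)=-f(u)$ yields the key identity
\[
g_1(u,v) \;=\; -\,g_2(-v,-u),
\]
so at any point of the antidiagonal $(z^*,-z^*)$ the two equilibrium equations $g_1=g_2=0$ collapse into the single scalar condition $\phi(z^*)=0$, where
\[
\phi(z)\;:=\;g_1(z,-z)\;=\;f(z)-z-\frac{2az}{b}.
\]

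Step~1 is a symmetry reduction of the hypothesis. The same identity shows that the involution $(z,y)\mapsto(-y,-z)$ preserves the quadrant $\{x>0,\,y<0\}$ and the hypothesis: $g_1(z,y)\geq 0$ is equivalent to $g_2(-y,-z)\leq 0$, and $g_2(z,y)\leq 0$ to $g_1(-y,-z)\geq 0$. Since exactly one of $z+y$ and $(-y)+(-z)=-(z+y)$ is $\leq 0$, after possibly swapping $(z,y)$ for $(-y,-z)$ I may assume without loss of generality that $z+y\leq 0$, i.e. $|y|\geq z$.

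Step~2 applies the intermediate value theorem to $\phi$ on $[z,1]$. A direct computation gives
\[
\phi(z)-g_1(z,y)\;=\;\frac{a(z-y)}{b}-\frac{2az}{b}\;=\;-\frac{a(z+y)}{b}\;\geq\;0,
\]
so $\phi(z)\geq g_1(z,y)\geq 0$ by hypothesis. At the right endpoint, the boundedness property forces $\phi(1)=f(1)-1-2a/b\leq -2a/b<0$. Since $f$ is Lipschitz continuous, $\phi$ is continuous, and the intermediate value theorem furnishes some $z^*\in[z,1]$ with $\phi(z^*)=0$; clearly $z^*\geq z>0$. By the collapse above, the point $(z^*,-z^*)$ then satisfies $g_1=g_2=0$ and lies strictly in the $(+,-)$ quadrant, so it is a persistent disagreement equilibrium, finishing the proof.

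The main obstacle is the symmetry bookkeeping in Step~1: proving the identity $g_1(u,v)=-g_2(-v,-u)$ cleanly, verifying that the involution really sends the hypothesis to itself, and correctly choosing its orientation so as to land in the regime $z+y\leq 0$. Once this is in place, the remaining argument is a one-dimensional IVT that uses only the hypothesis at a single point together with the boundedness of $f$ at $1$; neither convergence of the ODE nor any delicate invariance of a two-dimensional region is required.
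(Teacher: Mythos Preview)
Your proof is correct and reaches the same target as the paper---a symmetric persistent disagreement point $(z^*,-z^*)$ produced via the intermediate value theorem together with the bound $f(1)\le 1$---but by a genuinely different and more elementary route. The paper argues dynamically: it launches the ODE at $(z,y)$, computes $\dot x_i(t)=b\,\epsilon(t)$ with $\epsilon(t)=f(x_i)-G$, asserts ``analogously'' that $\dot x_j(t)=-b\,\epsilon(t)$, and then invokes IVT along the resulting trajectory to locate the symmetric intersection of $F_1$ with $G$. You instead work statically: the explicit identity $g_1(u,v)=-g_2(-v,-u)$ lets you pass from $(z,y)$ to $(-y,-z)$ if necessary so that $z+y\le 0$, after which a one-variable IVT on $\phi(u)=f(u)-u-2au/b$ over $[z,1]$ finishes, using $\phi(z)\ge g_1(z,y)\ge 0$ and $\phi(1)\le -2a/b<0$. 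Your approach needs no ODE solution theory and sidesteps the paper's step $\dot x_j=-b\,\epsilon$, which in fact requires $f(x_i)+f(x_j)=x_i+x_j$ and so does not hold for a general feasible $f$; the price is that you say nothing about where the trajectory from $(z,y)$ actually goes, only that some PD equilibrium exists. One wording nit: when $z+y=0$ both $z+y$ and $-(z+y)$ are $\le 0$, so ``exactly one'' should read ``at least one'', but this does not affect the argument.
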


\begin{proof}
Consider $x_i(0)=z$ and $x_j(0)=y$.
W.l.o.g. assume that \( f(z) > G(z, y,a, b) \) and \( f(y) < H(z, y,a, b) \).
Thus, assume $f(x_i(0))> G(x_i(0),x_j(0),a,b)$.  Define
\begin{align}
\label{eq:initialdiscrepancies}
    \epsilon(t) &:= f(x_i(t)) - G(x_i(t), x_j(t),a, b).
\end{align}
At $t=0$ we get from our assumption that $\epsilon(0)>0$.
The derivatives \( \dot{x_i}(t) \) and \( \dot{x_j}(t) \) can be described in terms of $\epsilon(t)$ which can be seen by applying the definition of the functions $\epsilon$ and $G$. More concrete,
\begin{align}
\label{eq:derivatives}
    \dot{x_i}(t) &= a(x_j(t) - x_i(t)) + b\left( \frac{a (x_i(t)-x_j(t))}{b} +x_i(t) + \epsilon(t) - x_i(t)\right) = b \cdot \epsilon(t), 
\end{align}
Analogously, it follows that, $\dot{x_j}(t) = -b \cdot \epsilon(t)$.
This means that \( x_i(t) \) is increasing as long as $\epsilon(t) > 0$ and \( x_j(t) \) is decreasing for the same amount of time and at the same rate. 

In other words, at time $0$, $(x_i(0), x_j(0))$ lies above $G$, and $x_i(t)$ is increasing until there exists an intersection, i.e., a point $(x^*,-x^*)$ such that $F_1$ intersects with $G$. By the boundedness of $f(\cdot)$ and since $b>0$, we have  $G(x_i(t),x_j(t),a, b) > 1$.
Thus, since, $f(\cdot)$ is continuous, we have, by the intermediate value Theorem, that there exists an $(x^*,-x^*)$ such that $F_1$ intersects with $G$ due to the symmetry. By Symmetry $F_2$ at this point also intersects with $H$. Thus, by definition, $G$ and $H$ the resulting point must be an equilibrium point and due to the different signs, it is a persistent disagreement state. 
\end{proof}
\begin{theorem}[Necessary Condition for PD]\label{thm:C}
Fix an arbitrary $b >0$ and $a >0$. Consider an \opiniondynamic{} $x$ with two agents $i$ and $j$..
Assume $f(\cdot)$ fulfills the \properties.
If a persistent disagreement state exists at $x_i(t)$ and $x_j(t)$, then
there must exist  $z \in (0,1]$ and $y \in [-1,0)$ such that \( f(z) \geq G(z, y,a, b) \) and \( f(y) \leq H(z, y,a, b) \).
\end{theorem}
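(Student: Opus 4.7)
The plan is to observe that Theorem~\ref{thm:C} is essentially a direct unpacking of the equilibrium condition in the polarized quadrant, combined with the symmetry of $f(\cdot)$ already exploited in the paper. The inequalities in the hypothesis of Theorem~\ref{thm:B} will in fact be witnessed with equality by the PD state itself.

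First I would assume that the system is in a persistent disagreement state at time $t$. By definition, this means (a) the state is an equilibrium, so $\dot x_i(t)=0$ and $\dot x_j(t)=0$, and (b) there exist two agents whose opinions have opposite signs. Since we only have two agents, either $x_i(t)>0 > x_j(t)$ or $x_j(t)>0 > x_i(t)$. By the reflection symmetry of $f(\cdot)$ (property~2) together with the $R$-induced symmetry between $F_1,F_2$ and $G,H$ noted in Section~\ref{sec:Polarization_planes_2agents}, I would swap the labels of the agents if necessary and assume without loss of generality that $x_i(t) \in (0,1]$ and $x_j(t) \in [-1,0)$.

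Next I would set $z:=x_i(t)$ and $y:=x_j(t)$ and simply rearrange the two equilibrium equations. From $\dot x_i(t)=0$ I obtain
\begin{equation*}
    a(y-z) + b(f(z)-z) = 0 \quad\Longleftrightarrow\quad f(z) = \frac{a(z-y)}{b} + z = G(z,y,a,b),
\end{equation*}
and from $\dot x_j(t)=0$ the analogous computation gives
\begin{equation*}
    f(y) = \frac{a(y-z)}{b} + y = H(z,y,a,b).
\end{equation*}
In particular $f(z) \ge G(z,y,a,b)$ and $f(y) \le H(z,y,a,b)$ both hold (with equality), which is exactly what the theorem asserts.

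There is no real obstacle: the derivation is purely algebraic once the equilibrium identity is written out, and Lemma~\ref{lem:plane} already tells us that the equilibrium condition is equivalent to $F_1$ meeting $G$ and $F_2$ meeting $H$ at $(z,y)$, so the only subtlety is making sure the WLOG reduction to the octant $\{x>0,\,y<0\}$ is justified; this is guaranteed by the reflection symmetry of $f$ (property~2), which maps equilibria in the quadrant $(-,+)$ bijectively to equilibria in $(+,-)$ while exchanging the roles of $G$ and $H$.
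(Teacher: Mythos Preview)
Your proof is correct and in fact more direct than the paper's. You observe that a persistent disagreement state $(z,y)$ with $z>0>y$ satisfies the equilibrium equations $f(z)=G(z,y,a,b)$ and $f(y)=H(z,y,a,b)$ exactly, so the required inequalities hold with equality; this is essentially a one-line unpacking of Lemma~\ref{lem:plane}. The paper instead argues by contrapositive: it assumes that for \emph{all} $z\in(0,1]$, $y\in[-1,0)$ the strict inequalities $f(z)<G(z,y,a,b)$ and $f(y)>H(z,y,a,b)$ hold, then uses Lemma~\ref{lem:plane} to conclude $\dot x_i<0$ and $\dot x_j>0$ throughout the $(+,-)$ quadrant, so no equilibrium can sit there, and finally invokes Theorem~3 of~\cite{candogan2022social} to guarantee convergence to a consensus state. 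Your route avoids the dynamical argument and the external convergence theorem entirely, which is cleaner for the stated claim; the paper's approach, on the other hand, extracts the additional information that under the negated hypothesis the system actually \emph{converges to consensus}, not merely that no PD state exists. One small remark: you do not even need the symmetry of $f$ for the WLOG step---simply take $z$ to be whichever of $x_i(t),x_j(t)$ is positive and $y$ the other; the equilibrium equations then give $f(z)=G(z,y,a,b)$ and $f(y)=H(z,y,a,b)$ directly.
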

\begin{proof}
We will prove the equivalent statement:
If for all $x_i(0) \in (0,1]$ and $x_j(0) \in [-1,0)$ we have that \( f(x_i(0)) < G(x_i(0), x_j(0),a, b) \) and \( f(x_j(0)) > H(x_i(0), x_j(0), a, b) \), then no persistent disagreement steady state exists, and both agents will reach consensus.

To demonstrate the absence of an equilibrium state in the $(+,-)$ quadrant, we analyze the system's dynamics under the given conditions. For \( x_i(0) > 0 \) and \( x_j(0) < 0 \), \( f(x_i(0)) < G(x_i(0), x_j(0),a, b) \) implies, by Lemma~\ref{lem:plane}, a negative derivative \( \dot{x_i}(t) \), leading to a decrease in \( x_i(t) \). Conversely, \( f(x_j(0)) > H(x_i(0), x_j(0), a, b) \) yields a positive derivative \( \dot{x_j}(t) \) (by Lemma~\ref{lem:plane}), causing \( x_j(t) \) to increase. These dynamics ensure that \( f(x_i(t)) \) never equals \( G(x_i(t), x_j(t), a, b) \) and \( f(x_j(t)) \) never equals \( H(x_i(t), x_j(t), a, b) \), precluding an equilibrium state within this quadrant.

Since we do not have persistent disagreement states, the only equilibrium states are consensus steady states, and given the continuity of the function \( f(\cdot) \) in equation ~\ref{eq:difeq}  and the influence matrix $A$ is symmetric which is 1s in our case and by applying Theorem 3 from\cite{candogan2022social}, we guarantee that agents’ opinions converge to equilibrium and since we do not have persistent disagreement states the system will converge to consensus steady states.
\end{proof}
 
As an immediate consequence of our methods, we can see that the degree to which agents disagree in persistent disagreement states is a function of the platform influence $b$: all those PD states must lie on the polarization plane which is a function of $b$. 

\section{Results for General Platform Functions for Complete Graph of n Agents}
\label{sec:generalResultsCompleteGraph}
In this section we present the necessary and sufficient conditions for strong consensus and persistent disagreement states in complete graphs consisting of n agents.

\begin{lemma}
\label{lem:planes_complete}
Consider the system at time $t$.
Let $z_i=x_i(t)$, if $(z_1, z_2, z_3, \ldots, z_n)$ is an equilibrium, it must be that \( F_i \) intersects with \( G \) at \( (z_i, \bar{x_i}(t),a, b) \) for all \( i \) agents. Moreover, we have that if $z_i>0$ and $f(z_i) > G(z_i,\bar{x_i}(t),a,b)$, then $\dot x_i(t) > 0$. 
Conversely, if $f(z_i) < G(z_i,\bar{x_i}(t),a, b)$, then $\dot x_i(t) < 0$. Analogue statements hold if $z_i<0$.
\end{lemma}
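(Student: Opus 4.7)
The plan is to mimic the proof of Lemma~\ref{lem:plane} almost verbatim, since the complete-graph case reduces to essentially the same algebraic identity once we rewrite the neighbor sum in terms of the mean $\bar{x}_i(t)$. Concretely, I would first observe that
\[
\frac{a}{n-1}\sum_{j \in N(i)}\bigl(x_j(t) - x_i(t)\bigr) \;=\; a\bigl(\bar{x}_i(t) - x_i(t)\bigr),
\]
so the ODE governing agent $i$ can be rewritten as
\[
\dot x_i(t) \;=\; a\bigl(\bar{x}_i(t) - x_i(t)\bigr) + b\bigl(f(x_i(t)) - x_i(t)\bigr).
\]
This brings the $n$-agent dynamics into exactly the same shape as the two-agent one, with $\bar{x}_i(t)$ playing the role of the single neighbor's opinion.

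Next, following the proof of Lemma~\ref{lem:plane}, I would introduce the discrepancy
\[
\epsilon_i(t) \;:=\; f(x_i(t)) - G\bigl(x_i(t),\bar{x}_i(t),a,b\bigr),
\]
substitute $f(x_i(t)) = G(x_i(t),\bar{x}_i(t),a,b) + \epsilon_i(t) = \frac{(a+b)x_i(t) - a\bar{x}_i(t)}{b} + \epsilon_i(t)$ into the rewritten ODE above, and let the $a\bar{x}_i(t)$, $ax_i(t)$, and $bx_i(t)$ terms cancel. The remaining identity is the key one:
\[
\dot x_i(t) \;=\; b\,\epsilon_i(t).
\]
From here both claims of the lemma are immediate. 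If $(z_1,\dots,z_n)$ is an equilibrium then $\dot x_i(t)=0$ for every $i$, forcing $\epsilon_i(t)=0$ and hence $f(z_i) = G(z_i,\bar{x}_i(t),a,b)$, which is precisely the statement that $F_i$ intersects $G$ at $(z_i,\bar{x}_i(t),a,b)$ for all $i$. Likewise, the sign of $\dot x_i(t)$ agrees with the sign of $\epsilon_i(t)$, giving the two conditional statements (and the analogue for $z_i<0$ follows by the same argument, or alternatively from the symmetry $f(-x)=-f(x)$ already used throughout).

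There is no real obstacle here: the whole content of the lemma is the cancellation identity $\dot x_i(t)=b\epsilon_i(t)$. The only place where one has to be a bit careful is the bookkeeping when rewriting $\frac{a}{n-1}\sum_{j\in N(i)}(x_j(t)-x_i(t))$ as $a(\bar{x}_i(t)-x_i(t))$, because this is the single step that uses the complete-graph structure (each $i$ has exactly $n-1$ neighbors and the normalization $\frac{a}{n-1}$ in the model was chosen precisely so this reduction holds). Once that step is done, the rest of the proof is the same one-line algebraic manipulation as in Lemma~\ref{lem:plane}.
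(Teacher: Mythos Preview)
Your proposal is correct and follows essentially the same approach as the paper: define the discrepancy $\epsilon = f(z_i) - G(z_i,\bar{x}_i(t),a,b)$, substitute into the ODE, and simplify to obtain $\dot{x}_i(t) = b\epsilon$. The only cosmetic difference is that you first collapse the neighbor sum to $a(\bar{x}_i(t)-x_i(t))$ before introducing $\epsilon$, whereas the paper carries the sum through a few more lines before making that simplification.
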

\begin{proof}
Let $\epsilon = f(z_i) - G(z_i,\bar{x}_i (t),a, b)$. We have,
\[
\dot{x}_{i}(t) = \frac{a}{n-1} \sum_{j \in N(i)} \left( x_j(t) - z_i \right) + b \left( f(z_i) - z_i \right)
\]
\[
\dot{x}_{i}(t) = \frac{a}{n-1} \sum_{j \in N(i)} \left( x_j(t) - z_i \right) + b \left( \frac{(a + b)z_i - a \bar{x}_i(t)}{b} + \epsilon - z_i \right)
\]
\[
\frac{(a + b)z_i - a \bar{x}_i(t)}{b} + \epsilon - z_i = \frac{(a + b)}{b} z_i - \frac{a \bar{x}_i(t)}{b} + \epsilon - z_i
\]
\[
= \left( \frac{a + b}{b} - 1 \right) z_i - \frac{a \bar{x}_i(t)}{b} + \epsilon = \frac{a}{b} z_i - \frac{a \bar{x}_i(t)}{b} + \epsilon
\]
\[
\dot{x}_i(t) = \frac{a}{n-1} \sum_{j \in N(i)} \left( x_j(t) - z_i(t) \right) + b \left( \frac{a}{b} z_i(t) - \frac{a \bar{x}_i(t)}{b} + \epsilon \right)
\]
\[
\dot{x}_i(t) = \frac{a}{n-1} \sum_{j \in N(i)} \left( x_j(t) - z_i(t) \right) + a z_i(t) - a \bar{x}_i(t) + b \epsilon
\]
\[
\frac{a}{n-1} \sum_{j \in N(i)} \left( x_j(t) - z_i(t) \right) = a (\bar{x}_i(t) - z_i(t))
\]
\[
\dot{x}_i(t)  = a (\bar{x}_i(t) - z_i(t)) + a z_i(t) - a \bar{x}_i(t) + b \epsilon
\]
\[
\dot{x}_i(t)  = a \bar{x}_i(t) - a z_i(t) + a z_i(t) - a \bar{x}_i(t) + b \epsilon
\]
\[
\dot{x}_i(t)  = b \epsilon
\]
Thus, only for $\epsilon = 0$ the derivative is zero, for $\epsilon > 0$ the derivative is positive, etc.
\end{proof}
\begin{theorem}[Strong Consensus]\label{thm:AN}
Fix arbitrary $b > 0$ and $a >0$, and consider a complete graph with $n$ vertices, $K_n$. Assume $f(\cdot)$ fulfills the \properties{}. Then,  for any strong consensus state $
(y, y, \dots, y) \in \mathbb{R}^n
$ it is sufficient and necessary that $f(y) = y$ where $y\in [-1, 1]$. 
\end{theorem}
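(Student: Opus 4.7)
The plan is to mimic the two-agent argument from Theorem~\ref{thm:A} but applied to the complete-graph ODE, using the fact that on the diagonal state $(y,y,\dots,y)$ the averaging term in the dynamics collapses completely. The proof naturally splits into the two implications, and both amount to direct substitution into the ODE for $K_n$.

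For sufficiency, I would assume $f(y) = y$ and take an \opiniondynamic{} $x$ with $x_i(0) = y$ for every $i$. Since $\bar{x}_i(0) = \frac{1}{n-1}\sum_{j \in N(i)} y = y$, the neighbour-influence term $\frac{a}{n-1}\sum_{j \in N(i)}(x_j(0) - x_i(0))$ vanishes identically, and the platform term becomes $b(f(y) - y) = 0$ by hypothesis. Hence $\dot x_i(0) = 0$ for every $i$, so the state is preserved, and by definition it is a strong consensus state. A quick aside here would be to note the free corollaries: by symmetry $f(-y) = -y$ also gives an equilibrium, and $f(0) = 0$ (from the \properties{}) recovers the trivial strong consensus at the origin.

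For necessity, I would suppose $(y,y,\dots,y)$ is a strong consensus equilibrium. Since equilibrium means $\dot x_i(t) = 0$ for every $i$ at the starting state, I substitute $x_j = y$ for all $j$ into the complete-graph ODE. The sum $\frac{a}{n-1}\sum_{j \in N(i)}(y - y)$ is identically zero, so the ODE reduces to $0 = b(f(y) - y)$. Because $b > 0$, this forces $f(y) = y$, which is exactly the claimed necessary condition. Equivalently, one can invoke Lemma~\ref{lem:planes_complete}: with $\bar x_i = y$ and $z_i = y$, the function $G(y,y,a,b) = \frac{(a+b)y - ay}{b} = y$, so the equilibrium condition $f(z_i) = G(z_i, \bar x_i, a, b)$ becomes $f(y) = y$.

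I do not anticipate a serious obstacle in this argument; the key point is simply that the pairwise-averaging term in the $K_n$ dynamics is identically zero on any diagonal state, so the equilibrium condition collapses to a one-dimensional fixed-point equation for $f$. The only subtlety worth flagging explicitly is that the statement concerns \emph{strong} consensus (all coordinates equal), which is what makes the averaging term vanish for every $i$ simultaneously; for non-diagonal consensus states the characterization would require the full machinery of Lemma~\ref{lem:planes_complete} applied coordinate-by-coordinate.
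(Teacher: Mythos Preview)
Your proposal is correct and follows essentially the same approach as the paper: both arguments reduce the equilibrium condition on the diagonal state to $b(f(y)-y)=0$ by noting that the averaging term vanishes, and then use $b>0$. The paper frames necessity as a proof by contradiction and leaves sufficiency implicit, whereas you handle both directions explicitly by direct substitution; your version is slightly more thorough but the core idea is identical.
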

\begin{proof}
Proof by contradiction. Assume \[ \exists y \in [-1, 1] \, \text{such that} \, f(y) \neq y \] and $x=(y, y, \dots, y)$ is an equilibrium state. According to the differential equation
\[
\dot{x}_i(t) = \frac{a}{n-1} \sum_{j \in N(i)} \left( y - y \right) + b \cdot (f(y) - y) = 0.
\]
Since the term \( \sum_{j \in N(i)} (y - y) = 0 \), this simplifies to 
\[
\dot{x}_i(t) = b \cdot (f(y) - y) =0.
\]
For this to hold, it must be that \( f(y) = y \). This contradicts the assumption that \( f(y) \neq y \), and therefore, we conclude that \( f(y) = y \) is both necessary and sufficient for consensus.
\end{proof}
Note that this characterization is only for states of strong consensus.
There may be other consensus states depending on the platform function. However, a necessary and sufficient condition for consensus to occur is naturally that the platform influence function intersects with $G$ for all agents of the graph in the same time $t$.\\
\begin{theorem}[Necessary Condition for PD]\label{thm:BN}
Fix an arbitrary $b > 0$ and $a > 0$ and consider a complete graph \( K_n \) with \( n \) agents, where each agent \( i \) has an opinion \( x_i(t) \) at time \( t \). Define:
\begin{itemize}
    \item \( X^+(t) = \{ x_i(t) \mid x_i(t) \geq 0 \} \) to be the set of states of agents with non-negative opinions.
    \item \( X^-(t) = \{ x_j(t) \mid x_j(t) \leq 0 \} \) to be the set of states of agents with non-positive opinions.
\end{itemize}
Assume \(f(\cdot)\) fulfills the \properties{}, and where \( \bar{x}_i(t) \) is the mean of all the neighboring nodes of node \( i \) . If a persistent disagreement state exists then there must exist at least one agent in \(X^+(t)\) and \(X^-(t)\), such that 
\begin{itemize}
    \item For all \( x_i(t) \in X^+(t) \), \( f(x_i(t)) \geq G(x_i(t),\bar{x_i}(t),a,b) \)
    \item For all \( x_j(t) \in X^-(t) \), \( f(x_j(t)) \leq G(x_j(t), \bar{x_i}(t),a,b) \)
\end{itemize}
\end{theorem}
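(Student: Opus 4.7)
The plan is to derive the claim as a direct corollary of Lemma~\ref{lem:planes_complete}, which has already done the analytic work of writing $\dot x_i(t)$ as a scalar multiple of the gap between the platform function and the plane $G$. No new calculation should be needed — the theorem is essentially a restatement of the equilibrium condition, refracted through the sign partition of the agents.

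First I would unpack the definition of a persistent disagreement state: it is an equilibrium state in which at least one agent has strictly positive opinion and at least one agent has strictly negative opinion. This immediately guarantees that $X^+(t)$ and $X^-(t)$ are both non-empty, which discharges the ``there must exist at least one agent in each set'' portion of the conclusion with no further work.

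Second, I would use that being in equilibrium means $\dot x_k(t) = 0$ for \emph{every} agent $k \in [n]$, not only those that happen to lie in $X^+(t)$ or $X^-(t)$. By Lemma~\ref{lem:planes_complete}, $\dot x_k(t) = b \cdot \epsilon_k$ where $\epsilon_k := f(x_k(t)) - G(x_k(t),\bar{x}_k(t),a,b)$, and since $b > 0$ the equilibrium condition forces $\epsilon_k = 0$, i.e., $f(x_k(t)) = G(x_k(t),\bar{x}_k(t),a,b)$ for every $k$. Specializing this equality to $x_i(t) \in X^+(t)$ gives the required $\geq$; specializing to $x_j(t) \in X^-(t)$ gives the required $\leq$.

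I do not expect a genuine obstacle; the only small subtlety worth a sentence in the write-up is the boundary case $x_k(t) = 0$, since such an agent belongs to both $X^+(t)$ and $X^-(t)$ under the non-strict definitions. Here the two stated inequalities collapse to the single equation $f(x_k(t)) = G(x_k(t),\bar{x}_k(t),a,b)$, which is precisely what the equilibrium condition already guarantees (and, using $f(0)=0$ from the symmetry property, imposes $\bar{x}_k(t) = 0$ as a consistency condition). Hence both inequalities are simultaneously satisfied and the proof is complete.
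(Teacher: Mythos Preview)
Your proposal is correct. You observe that a persistent-disagreement state is, in particular, an equilibrium, so by Lemma~\ref{lem:planes_complete} the gap $\epsilon_k = f(x_k(t)) - G(x_k(t),\bar x_k(t),a,b)$ vanishes for \emph{every} agent $k$; the required inequalities then hold with equality, and the non-emptiness of $X^+(t)$ and $X^-(t)$ is immediate from the definition of persistent disagreement. This is a clean, purely algebraic argument.

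The paper takes a different route: it argues the contrapositive dynamically. It supposes that the strict reverse inequalities $f(x_i(t)) < G(\cdot)$ on $X^+$ and $f(x_j(t)) > G(\cdot)$ on $X^-$ hold, invokes Lemma~\ref{lem:planes_complete} to conclude that positive agents are strictly decreasing and negative agents strictly increasing, and infers that the trajectory is driven away from any putative PD equilibrium, so none can exist; convergence to consensus is then obtained via the cited convergence theorem. Your direct argument is shorter and avoids any appeal to dynamics or to the external convergence result; the paper's argument, on the other hand, is closer in spirit to the two-agent Theorem~\ref{thm:C} and yields the additional qualitative conclusion that under the reversed strict inequalities the system actually flows toward consensus, not merely that no PD equilibrium sits there.
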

\begin{proof}
We will prove the equivalent statement: If for all \(x_i(t) \in X^+(t) \) and \( x_j(t) \in X^-(t) \) we have that \( f(x_i(t))<G(x_i(t),\bar{x_i}(t),a, b) \) and \(f(x_j(t))>G(x_j(t), \bar{x_i}(t),a, b)\), then no persistent disagreement steady state exists, and all agents will reach consensus.
Consider a system with \( k^+ \) positive agents and \( k^- \) negative agents in \( \mathbb{R}^n \). Assume we begin with an initial vector \( \mathbf{x}(0) \) that contains values from the sets \( X^+(t) \) and \( X^-(t) \), representing positive and negative agents respectively, and there exists an equilibrium \[
\mathbf{x}^* = 
\begin{bmatrix}
\mathbf{x}^{+*}_1 & \cdots & \mathbf{x}^{+*}_{k^+} & \mathbf{x}^{-*}_1 & \cdots & \mathbf{x}^{-*}_{k^-}
\end{bmatrix}
\]
where \(\mathbf{x}^*\) represent a Persistent Disagreement Equilibrium (PDE) such that \( \dot{\mathbf{x}}^*(t) = 0 \) and \[
\forall i, \; (\mathbf{x}_i(0) > 0 \implies \mathbf{x}_i(0) < \mathbf{x}_i^* \; \wedge \; (\mathbf{x}_i(0) < 0 \implies \mathbf{x}_i(0) > \mathbf{x}_i^*)
\] Now, consider the dynamical behavior of agents in \( \mathbf{x}(0) \):
\begin{itemize}
  \item For any \( x_i(t) \in X^+(t) \), since \( f(x_i(t)) < G(x_i(t), \bar{x_i}(t),a, b) \), it leads to \( \dot{x}_i(t) < 0 \), indicating that \( x_i \) is decreasing.
  \item Conversely, for any \( x_j(t) \in X^-(t) \), since \( f(x_j(t)) > G(x_j(t), \bar{x_i}(t),a, b) \), then \( \dot{x}_j(t) > 0 \), implying that \( x_j \) is increasing.
\end{itemize}
Given these conditions, the dynamics \( \dot{x}_i(t) \) and \( \dot{x}_j(t) \) suggest that the system's state is moving away from the (PDE) state \( \mathbf{x}^* \). If these inequalities hold consistently for all agents, then no Persistent Disagreement Equilibrium (PDE) can exist in the long term, as the dynamics drive the system away from the state \( \mathbf{x}^* \). Since we do not have persistent disagreement states, and by applying Theorem 3 from~\cite{candogan2022social}, the system will converge and the only equilibrium states are consensus steady states.
\end{proof}
\section{Results for the Stochastic Block Model with Two Blocks}\label{sec:SBM}
In this section we generalize our results in two-agent system to the stochastic block model (SBM) with two blocks. We show that the equilibrium of two-agent system can be approximated when the network is large with a lot of connections. This section and the main associated Theorem is inspired by Theorem 2 in~\cite{candogan2022social} we instead of proving it for the steady states of sign function, we try to generalize it for any function that adheres the platform function properties. We consider two sets of agents, left and right, with connection probabilities defined within and between these sets. The initial opinions of each block are random variables assigned from predefined left/ right sets. Since in the proof we will use the mean field approximation, we have to normalize the influence matrix $A$ to a single scalar $a$. We assume that if agent \( i \) has \( d_{i} \) different connections, each connection carries a weight of \( a / d_{i} \). Specifically, \( a_{i j} = a / d_{i} \) represents the influence of agent \( j \) on agent \( i \), provided \( j \) is a neighbor of \( i \); otherwise, \( a_{i j} = 0 \). The opinion of agent \( i \) evolves according to the following equation.
\[\dot{x}_{i, n}(t)=\frac{a}{|N(i)|} \sum_{j \in N(i)}\left(x_{j, n}(t)-x_{i, n}(t)\right)+b\left(f\left(x_{i, n}(t)\right)-x_{i, n}(t)\right)
\]
Where \(N(i)\) represents the set of connections for agent \(i\), and \(|N(i)|\) denotes the total number of connections that agent \(i\) has, we build on similar notations from previous work~\cite{candogan2022social}. Theorem~\ref{thm:SBM} demonstrates that in dense networks, as the number of agents increases, the opinions of left (right) agents converge over time to the equilibrium opinion of the corresponding left (right) agent in a simplified two-agent system, provided certain conditions on the initial opinion distribution are met. 

We define \(A_n\) as an adjacency matrix, where \(A_{n,ij} = 1\) indicates a connection between agent \(i\) and agent \(j\), and \(A_{n,ij} = 0\) otherwise. The set of all possible adjacency matrices for a network of \(n\) left and right agents is denoted by \(\boldsymbol{A}_n\). The stochastic block model (SBM) with two blocks generates a probability space \(\left( \boldsymbol{A}_n, 2^{\boldsymbol{A}_n}, \mathbb{P} \right)\), where \(2^{\boldsymbol{A}_n}\) is the set of all subsets of \(\boldsymbol{A}_n\), and \(\mathbb{E}\) is the expectation operator corresponding to the probability measure \(\mathbb{P}\).

The initial opinion of each left agent is drawn from a random variable on \(X_{L} \subset \mathbb{R}_{-} = (-\infty, 0]\), and for each right agent from \(X_{R} \subset \mathbb{R}_{+} = [0, \infty)\). The distribution of initial opinions is captured by the probability measure \(\mu_0\), which is supported on the finite set \(\boldsymbol{X}_0^n\) in \(\mathbb{R}^{2n}\), where \(\boldsymbol{X}_0 = X_L \times X_R\). The SBM with two blocks defines a product probability space \(\left(\boldsymbol{A}_n \times \boldsymbol{X}_0^n, 2^{\boldsymbol{A}_n} \times 2^{\boldsymbol{X}_0^n}, \mathbb{P} \otimes \mu_0\right)\), where \(\otimes\) represents the product of \(\mathbb{P}\) and \(\mu_0\). 

Given a realized adjacency matrix \(A\), the opinion of each agent evolves dynamically based on their connections and platform influence. Following~\cite{candogan2022social}, a function \(f(n)\) is said to be \(\omega(g(n))\) if for every \(c > 0\), there exists \(n_0 \geq 1\) such that \(f(n) > c g(n)\) for all \(n \geq n_0\). We define \(X_{eq}(a, b)\) as the set in \(\mathbb{R}^2\) that includes initial opinions that converge to equilibrium under parameters \(a\) and \(b > 0\). The opinion of agent \(i\) at time \(t\), given the adjacency matrix \(A_n\) and initial opinions \(\boldsymbol{x}_0\), is denoted by \(x_{i, n, A_n, \boldsymbol{x}_0}(t)\).

\begin{theorem}\label{thm:SBM}
Assume that \( p(n) \) and \( q(n) \) satisfy \( \omega\left(\frac{\ln n}{n}\right) \). Let \( \beta = \lim_{n \to \infty} \frac{q(n)}{q(n) + p(n)} \). Consider a two-agent system characterized by \( (a\beta, b, (x_{L}, x_{R})) \) with an equilibrium \( \boldsymbol{e} = (e_{L}, e_{R}) \), and $f(\cdot)$ fulfills the Platform Influence Functions Properties and it is monotonically increasing. If \( X_{L} \times X_{R} \subseteq X_{eq}(a\beta, b) \) where \( X_{eq}(a\beta, b) \) is the set of initial values that lead to this equilibrium, then for every \( \epsilon > 0 \), there exist \( \delta > 0 \), \( N > 0 \), \( T > 0 \), and sets \( C_{\delta, n} \subset \boldsymbol{A}_{n} \) such that \( \lim_{n \rightarrow \infty} \mathbb{P}(C_{\delta, n}) = 1 \) and for all \( n \geq N \), all \( t \geq T \), all \( A_{n} \in C_{\delta, n} \), all \( \boldsymbol{x}_{0} \in (X_{L} \times X_{R})^{n} \), 
we have
\begin{equation}
\label{eq:SBM}
|x_{i, n, A_{n}, \boldsymbol{x}_{0}}(t) - e_{L}| \leq \epsilon \text{ and } |x_{j, n, A_{n}, \boldsymbol{x}_{0}}(t) - e_{R}| \leq \epsilon 
\end{equation}
for every left and right agent \( i \). \end{theorem}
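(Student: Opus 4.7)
The plan is to establish the result via a mean-field approximation argument combined with concentration of random graph degrees, following the blueprint of Theorem~2 of~\cite{candogan2022social} but replacing the sign-function analysis with an argument that uses only Lipschitz continuity and monotonicity of $f(\cdot)$.

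\emph{Step 1: Concentration event.} Since $p(n), q(n) = \omega(\ln n / n)$, for every fixed $\delta > 0$ a Chernoff bound combined with a union bound over all $2n$ agents shows that, with probability tending to $1$, every left agent $i$ has $|N(i) \cap L| = (1 \pm \delta) n p(n)$ and $|N(i) \cap R| = (1 \pm \delta) n q(n)$, and analogously for right agents. Let $C_{\delta,n}$ be the set of adjacency matrices where this holds; then $\lim_{n \to \infty} \mathbb{P}(C_{\delta,n}) = 1$. On $C_{\delta,n}$, the normalized neighbor sum for a left agent $i$ can be decomposed as
\[
\frac{1}{|N(i)|} \sum_{j \in N(i)} x_{j,n}(t) = (1-\beta_n) \bar{x}^L_i(t) + \beta_n \bar{x}^R_i(t) + O(\delta),
\]
where $\bar{x}^L_i, \bar{x}^R_i$ are empirical means over $i$'s left and right neighbors, and $\beta_n = q(n)/(p(n)+q(n)) \to \beta$.

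\emph{Step 2: Comparison with the two-agent system.} Let $(\hat{x}_L(t), \hat{x}_R(t))$ denote the solution of the two-agent ODE with parameters $(a\beta, b)$ starting from some reference initial condition in $X_L \times X_R$; by hypothesis, $(\hat{x}_L(t), \hat{x}_R(t)) \to (e_L, e_R)$. Define the block-wise deviations
\[
D_n(t) = \max_{i \in L} |x_{i,n}(t) - \hat{x}_L(t)| + \max_{j \in R} |x_{j,n}(t) - \hat{x}_R(t)|.
\]
Differentiating and using the decomposition above, together with the Lipschitz property of $f(\cdot)$ (constant $L_f$) and the fact that $|\bar{x}^L_i(t) - \tfrac{1}{n}\sum_{k \in L} x_{k,n}(t)|$ is controlled by the concentration event, yields an estimate of the form
\[
\dot D_n(t) \leq (2a + b L_f) D_n(t) + \eta(\delta, n),
\]
where $\eta(\delta, n) \to 0$ as $\delta \to 0$ and $n \to \infty$. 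Grönwall's inequality then bounds $D_n$ on any fixed compact time interval $[0,T]$.

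\emph{Step 3: Choosing $T$, $\delta$, $N$.} Given $\epsilon > 0$, first pick $T$ so that the two-agent trajectory satisfies $|\hat{x}_L(T) - e_L|, |\hat{x}_R(T) - e_R| \leq \epsilon/2$; this is possible because $(x_L, x_R) \in X_{eq}(a\beta, b)$. Then choose $\delta$ small and $N$ large so that the Grönwall bound gives $D_n(T) \leq \epsilon/2$ on $C_{\delta,n}$ for all $n \geq N$. A triangle inequality yields the claim at $t = T$. To extend from $t = T$ to all $t \geq T$, use that the two-agent equilibrium is attracting on $X_L \times X_R$ combined with monotonicity of $f(\cdot)$: monotonicity ensures that the ordering between block-wise extrema and the two-agent trajectory is preserved over time, so the control at $T$ propagates forward.

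\emph{Main obstacle.} The delicate part is controlling the cross-coupling between individual agents in the same block: because every agent sees a slightly different random neighborhood, opinions within a block can drift apart, and one must show this drift does not amplify. This is precisely where the monotonicity of $f(\cdot)$ is used to obtain a comparison principle, and where the $\omega(\ln n / n)$ edge density is needed so that uniform concentration (via union bound) over all $2n$ neighborhoods is strong enough to keep $\eta(\delta,n)$ negligible relative to the contractive dynamics of the two-agent limit system.
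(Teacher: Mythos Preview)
Your concentration step (Step~1) matches the paper and is fine. The genuine gap is in Step~2. Your deviation quantity $D_n(t)$ is measured against a \emph{single} reference two-agent trajectory $(\hat x_L,\hat x_R)$, but the theorem allows every agent in the left block to pick its own initial opinion anywhere in $X_L$ (and similarly for the right block). Hence $D_n(0)$ is of order $\operatorname{diam}(X_L)+\operatorname{diam}(X_R)$, a fixed positive constant independent of $n$ and $\delta$. Your Gr\"onwall bound $\dot D_n\le (2a+bL_f)D_n+\eta$ has a \emph{positive} linear coefficient, so on $[0,T]$ it only yields $D_n(T)\le e^{(2a+bL_f)T}\bigl(D_n(0)+T\eta\bigr)$. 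Since $T$ was already fixed by the two-agent convergence and $D_n(0)$ cannot be made small, you cannot force $D_n(T)\le\epsilon/2$. The same obstruction blocks the extension in Step~3: a positive Gr\"onwall exponent gives no control on $[T,\infty)$.

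The paper avoids this entirely by replacing the Gr\"onwall/error-propagation idea with a \emph{sandwich comparison}. It introduces four scalar ODEs $\bar x_L,\underline x_L,\bar x_R,\underline x_R$ with perturbed coupling constants $a_1\le a\beta\le a_2$ and initial values $\bar x_L(0)=\max X_L$, $\underline x_L(0)=\min X_L$, etc. Monotonicity of $f$ is used (via a differential-inequality/comparison theorem, Hartman Thm.~4.1) to prove that on $C_{\delta,n}$ every left agent's trajectory stays trapped between $\underline x_L(t)$ and $\bar x_L(t)$ for all $t\ge 0$, and analogously for right agents. The hypothesis $X_L\times X_R\subseteq X_{eq}(a\beta,b)$ is then what forces \emph{both} the upper and lower bounding trajectories to converge to equilibria within $\epsilon$ of $(e_L,e_R)$ once $\delta$ is small and $n$ large. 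The sandwich closes, giving the bound for all $t\ge T$ in one stroke. Your sketch gestures at a comparison principle only in the final extension step; the paper uses it as the main engine, which is what handles both the nonzero initial spread within blocks and the uniformity in $t$.
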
 Then, Inequality \ref{eq:SBM} holds when the network is large and dense. So that the influence of an agent on her neighbours is tiny with high probability. In other words, this Theorem is applying the mean-field approximation for the opinion dynamics that are generated by the SBM. This Theorem shows that if we have an equilibrium in two agents system, then if the network is dense enough and based on initial opinions, and the platform influence function satisfies the platform properties then the agents' opinions in this general stochastic block model can be approximated to the opinion of two agent system. 
The expected number of connections from right to left or left to right is denoted as \( q(n)n \),The expected total number of connections is \( q(n)n + p(n)n \). When the network is sufficiently dense, meaning \( p(n) \) and \( q(n) \) satisfy \( \omega\left(\frac{\ln(n)}{n}\right) \) and \( n \) is large, these expected values hold. Then, by applying Lemma 4 from~\cite{candogan2022social}, which employs a concentration argument, we see that with high probability, each agent has approximately \( q(n)n \) connections of the opposite type, with a small margin of error, and around \( p(n)n + q(n)n \) total connections, again with a small error. Then, we construct differential equations to establish bounds on the paths of all left and right agents, each of whom has approximately \( q(n) n \) connections to agents of the opposite type and a total of \( q(n) n + p(n) n \) connections. These bounds show that for large \( t \), the opinions of all agents in both blocks are close to the equilibrium opinion of a single representative agent in two agent system. Note that, the opinion of each left or right agent is almost not influenced by the other left (right) agents and is approximately influenced by the single right (left) representative agent at a rate of  \( a q(n) n /(p(n) n+q(n) n) \approx a \beta \).
The influence parameter is $\beta$ since \( p(n) n + q(n) n \) is the expected total number of connections and \( q(n) n \) is the expected number of connections with the other block. Therefore, the equilibrium states of the SBM can be approximated by those of a two-agent system for large \( n \) and \( t \). 
The details of how we proved that can be seen in Appendix~\ref{sec:appendix-SBM}.  
\section{Results for Specific Platform Functions}
\label{sec:specificResults}
In this section, we study explicit platform signaling functions and consider both polarizing and harmonizing functions.
\subsection{Examples of Polarizing Platform Influence Functions (PPIF)}
\label{sec:PPIF}
One example of a polarizing platform influence function is the sign function - to be more precise, we will study a continuous version of it to remain within our framework. 
\subsubsection{Sign Function}
This function was used by~\cite{candogan2022social}
as their platform function. This platform function pushes each agent's opinion towards content that aligns with their current view. It is a polarizing function as it intersects with the polarization planes as shown in the Figures~\ref{fig:SignFunctionWithIntersectionPoints}, ~\ref{fig:SignFunctionWithIntersectionPoints2}, ~\ref{fig:SignFunctionWithIntersectionPoints3}, \ref{fig:SignFuncIntersection-1} and \ref{fig:SignFuncIntersection-2} for the $(+,-)$ quadrant. 
The purple plane represents $G$, the green plane represents $H$, the red plane represents \((x_i(t),x_j(t),f(x_i(t)))\), and the yellow plane represents\((x_i(t),x_j(t), f(x_j(t)))\) with \(\epsilon = 0.05\), where \(\epsilon\) is a threshold. 

In~\cite{candogan2022social} the authors present a continuous approximation of the sign function for a very small $\epsilon \textgreater 0$
\[    \label{eq:signFunction}
sgn_\epsilon(x_i) =
\begin{cases}
    -1 & \text{if } x_i < -\epsilon \\
    x_i/\epsilon & \text{if } -\epsilon \leq x_i \leq \epsilon \\
    1 & \text{if } x_i >\epsilon.
\end{cases}\] \\
In a two agent \opiniondynamic{} with some network influence $b > 0$, and with the agent's influence on the other agent given by $a_{12} = a_{21} = a$ for some $a > 0$. There are five steady states:
They are $(1,1),$ $(-1,-1), (0,0), (\frac{b}{2 a+b}, \frac{-b}{2 a+b})$ and $(\frac{-b}{2 a+b}, \frac{b}{2 a+b})$, as shown in~\cite{candogan2022social}.
We generalize this prove to $n$ agents in Theorem~\ref{thm:signInClique}, but it is useful to first understand the case of two agents. 

We visualize the aforementioned equilibria in Figure \ref{fig:plotSignFun}, the green points show the consensus points, and the red points show the PD points. 
The proof sketch for two agents is as follows:

For $\epsilon$ small enough, we show that $F_1$ and $G$ intersect in the point $(\frac{b}{2 a+b}, \frac{-b}{2 a+b})$. Let \( x_i = \frac{b}{2 a+b} \) and \( x_j = \frac{-b}{2 a+b} \). The expression for \( G(x_i(t),x_j(t),a,b) \) is
\[ G(x_i(t),x_j(t),a,b) = \frac{a (x_i(t) - x_j(t))}{b} + x_i(t) \]
Substituting the values of \( x_i(t) \) and \( x_j(t) \), we obtain
\[G(x_i(t),x_j(t),a,b) = \frac{a (\frac{b}{2 a+b} - \left(\frac{-b}{2 a+b}\right))}{b} + \frac{b}{2 a+b}=1 \]
Thus, \( G(x_i(t),x_j(t),a,b) = f(\frac{b}{2 a+b} )\),if $\epsilon$ is sufficiently small.
For analogue reasons it holds that \( H(x_i(t),x_j(t),a, b) = f(\frac{-b}{2 a+b} )\) if $\epsilon$ is small enough.
\begin{figure}[ht]
    \centering
    \begin{minipage}[b]{0.32\textwidth} 
        \includegraphics[width=\textwidth]{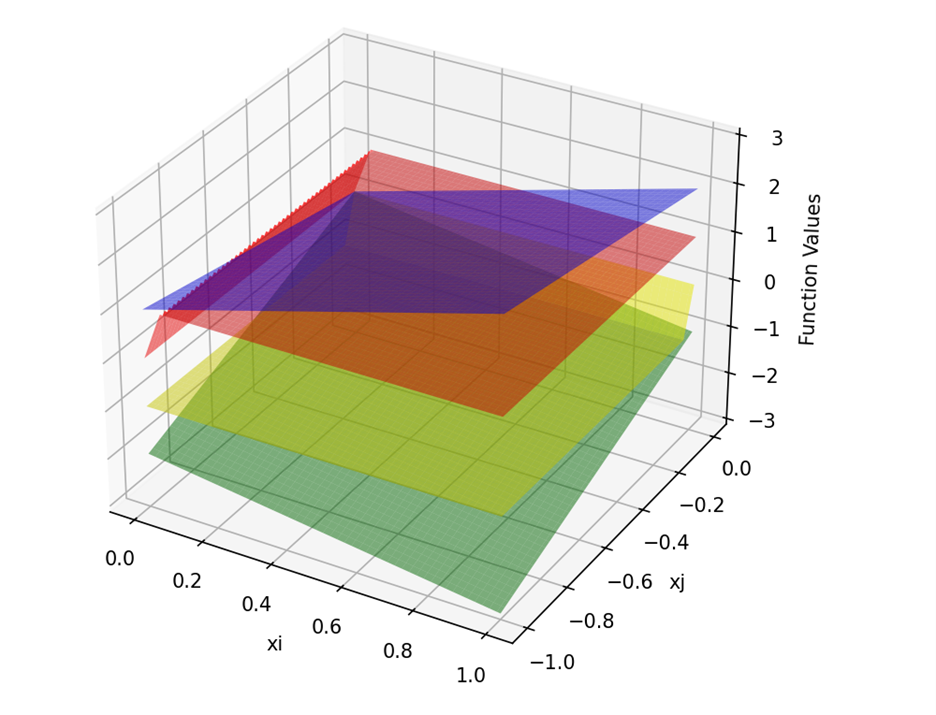}
        \caption{A depiction of the sign function and polarization planes: $G$ (purple), $F_1$ (red), $F_2$ (yellow) and $H$ (green).}
        \label{fig:SignFuncIntersection-1}
    \end{minipage}
    \hfill 
    \begin{minipage}[b]{0.32\textwidth}
        \includegraphics[width=\textwidth]{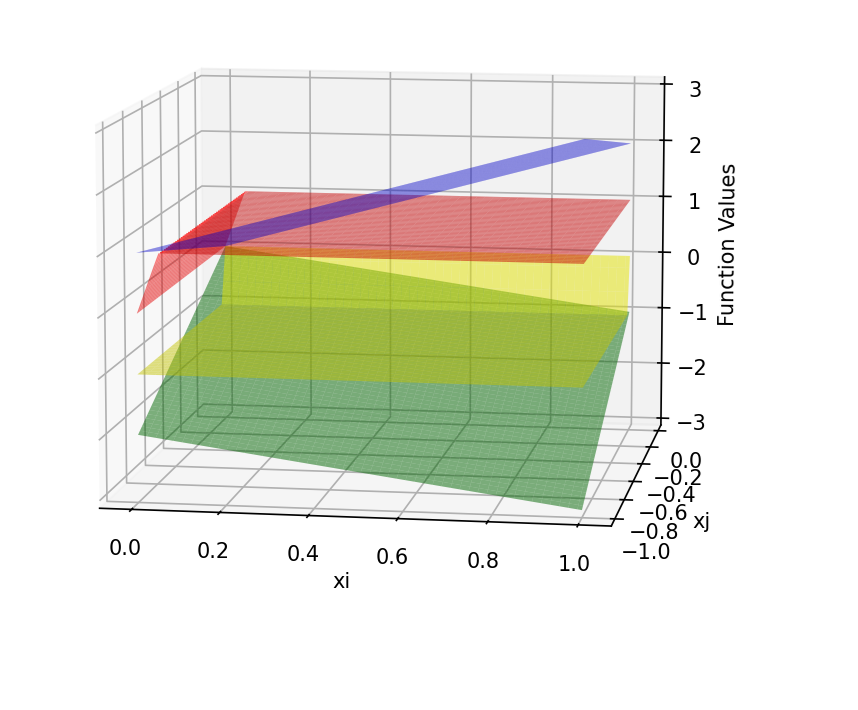}
        \caption{A depiction of the sign function and polarization planes: $G$ (purple), $F_1$ (red), $F_2$ (yellow) and $H$ (green) from a different viewing angle}
        \label{fig:SignFuncIntersection-2}
    \end{minipage}
    \hfill 
    \begin{minipage}[b]{0.32\textwidth}
        \includegraphics[width=\textwidth]{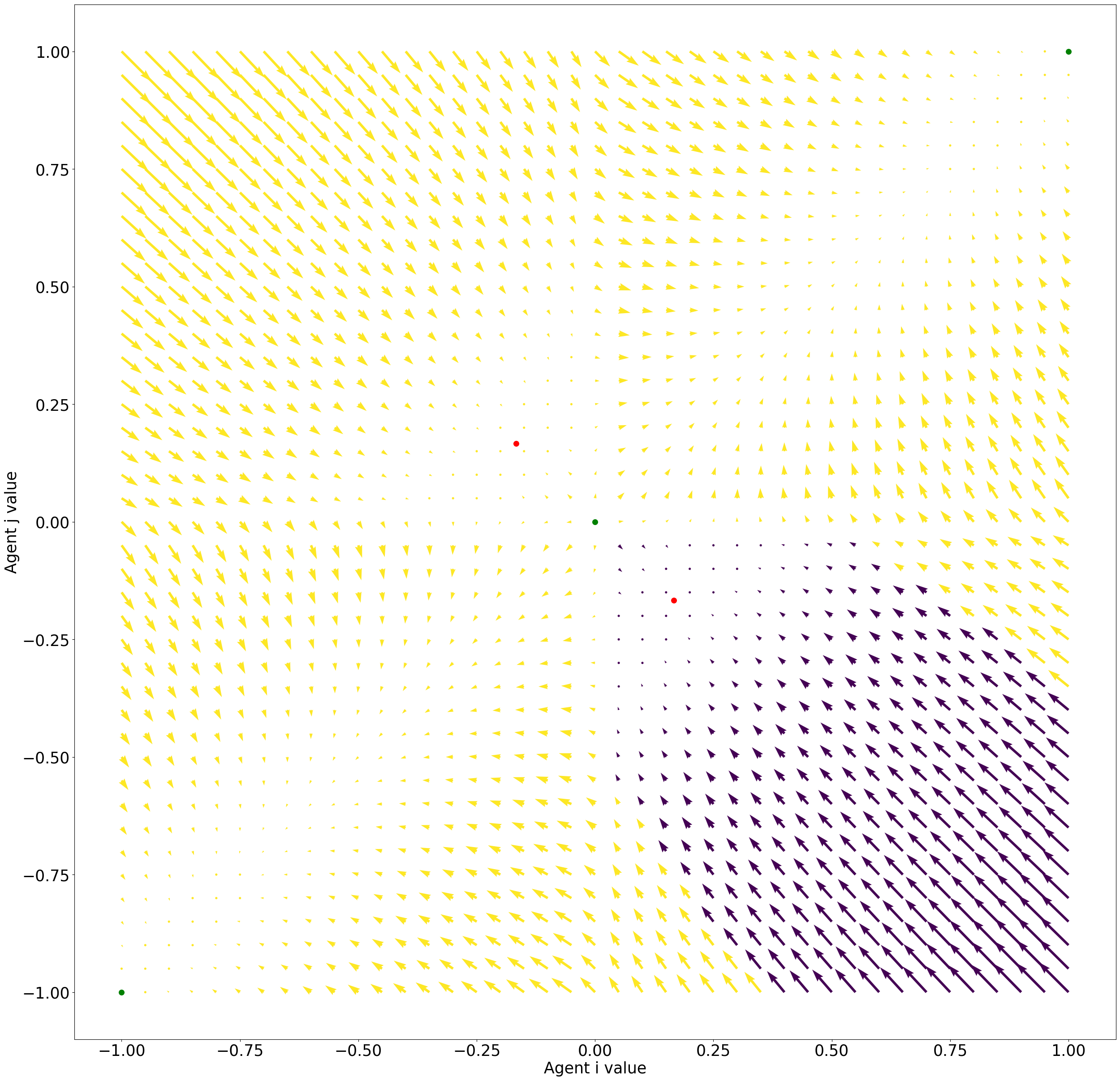}
        \caption{Vector field plot of Sign Function in two agents system where b = 0.4 and a=1. Green points indicate consensus while red ones represent persistent disagreement}
        \label{fig:plotSignFun}
    \end{minipage}
\end{figure}
\begin{theorem}
\label{thm:signInClique}
Consider an \opiniondynamic{} $x$ with $n$ agents and with 
\( f(x_i) = sgn_\epsilon(x_i)\) for $\lim \epsilon \rightarrow 0$. 
For $v\in [-1,1]$, let $s(v)=|\{ i~|~x_i=v\}|$ be the number of agents with value $v$.
The set of states that are equilibria for every $\varepsilon$ is the union of the following five sets. 
\begin{enumerate}
\item $\{ (0,0, \ldots , 0)\}$, $\{ (1,1, \ldots , 1)\}$,$\{ (-1,-1, \ldots , -1)\}$
\item For every $k \in [n-1]$:\\ 
$\left\{ (x_i)_{i\in [n]} ~\middle|~s\left(\frac{2a k^+ - a n + b n}{a n + b n}\right) = k \text{ and } s\left(\frac{-(a n - 2a k^+ + b n)}{a n + b n}\right) = n-k \right\}$
\item For every  $k \in [n-1]$ and with $2k< n$:\\ 
$\left\{ (x_i)_{i\in [n]} ~\middle|~s\left(\frac{b}{b+a} \right) = k, s\left(\frac{-b}{b+a}\right) = k \text{ and } s\left(0\right) = n-2k  \right\}$
\end{enumerate}
\end{theorem}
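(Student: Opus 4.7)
The approach is to apply Lemma~\ref{lem:planes_complete}, which characterises equilibria by the condition $f(x_i) = G(x_i, \bar{x}_i, a, b)$, equivalently $(a+b) x_i - a \bar{x}_i = b \, f(x_i)$, for every agent $i$. In the limit $\varepsilon \to 0$ the function $sgn_\varepsilon$ takes only the values $-1, 0, +1$ on its input, so I partition the agents by the sign of their opinion into $P = \{i : x_i > 0\}$, $N = \{i : x_i < 0\}$ and $Z = \{i : x_i = 0\}$, of sizes $k$, $m$ and $z = n - k - m$ respectively. The task then reduces to enumerating which triples $(k, m, z)$ admit a consistent assignment of values solving the equilibrium equation.

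The first substantive step is to show that all agents in $P$ must share a single value $x^+$, and all agents in $N$ a single value $x^-$. For $i, i' \in P$, the clique structure gives $\bar{x}_i - \bar{x}_{i'} = (x_{i'} - x_i)/(n-1)$, while both satisfy $(a+b) x_\ell - a \bar{x}_\ell = b$; subtracting yields $(x_i - x_{i'}) \bigl[1 + \tfrac{a}{(a+b)(n-1)}\bigr] = 0$ and hence $x_i = x_{i'}$. An identical argument handles $N$. Consequently, every equilibrium is described by at most three distinct values $x^+, x^-, 0$ together with their multiplicities $k, m, z$.

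I then case-split on $(k, m, z)$. The three ``pure'' configurations $z = n$, $k = n$ and $m = n$ give the three singletons of item 1, where the common value must be a fixed point of $sgn_\varepsilon$ and hence $0$, $+1$, or $-1$. For $z = 0$ and $1 \le k \le n-1$, the equilibrium equations for a representative positive agent and a representative negative agent form a $2 \times 2$ linear system in $(x^+, x^-)$ whose coefficients depend on $k$ and $n-k$; solving it by elimination produces the closed-form values stated in item 2. For $0 < z < n$, the equation for a zero agent reads $\bar{x}_i = 0$, i.e.\ $k x^+ + m x^- = 0$; substituting this into the mean seen by a positive agent gives $\bar{x}_i = -x^+/(n-1)$, which collapses the positive-agent equation to one equation in $x^+$ alone, yielding $x^+ = b/(a+b)$ and symmetrically $x^- = -b/(a+b)$. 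The zero-mean relation then forces $k = m$, and since $z > 0$ we have $2k < n$, which is exactly item 3.

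The main obstacle I anticipate is purely computational: executing the elimination in the $z = 0$ case cleanly enough to recover the stated closed form (modulo the edge-weight normalisation used in Section~\ref{sec:Polarization_planes_Nagents}), and checking the self-consistency inequalities $x^+ \in (0,1]$ and $x^- \in [-1,0)$ so that the signs assumed by the partition are indeed produced by the resulting formulas. Once Lemma~\ref{lem:planes_complete} and the clique-uniqueness step are in place, the rest is elementary linear algebra on a small system; the conceptual content is the observation that $sgn_\varepsilon$ collapses any equilibrium into at most three blocks of equal value.
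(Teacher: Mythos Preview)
Your plan is essentially the same as the paper's: both partition the agents by sign, argue that all agents of a given sign must share one common value, and then case-split on whether any zero agents are present to solve the resulting small linear systems. Your subtraction argument for ``all agents in $P$ share one value'' is a cleaner variant of the paper's degree-of-freedom reasoning via its equation for a positive/negative pair, and your explicit mention of the self-consistency checks $x^+\in(0,1]$, $x^-\in[-1,0)$ is a point the paper leaves implicit, but neither changes the structure of the argument.
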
  

\begin{proof}
We will show that all points named in the Theorem are a steady state for $\epsilon$ small enough. Moreover, we will show that these are the only steady-state points.

In any system, we will have three consensus steady points which are  (1, 1, 1, \ldots, 1), (-1, -1, -1, \ldots, -1), and (0, 0, 0, \ldots, 0) $\in\mathbb{R}^n$. From Theorem \ref{thm:AN}, we obtain that these states are steady states since $f(1)=1$.

Next we show that there can be no other steady state where all entries are positive beside $(1, \ldots, 1) \in\mathbb{R}^n$.

\begin{itemize}
\item Case 1:
All agents are positive. Let agent $j \in [n]$ be an arbitrary positive agent. It's derivative is given by the following. 
\begin{align*}
\dot{x}_i(t) = \frac{a}{n-1} \sum_{j \in N(i)} (x_j(t) - x_i(t)) + b(1 - x_i(t)) \\
=\frac{a}{n-1} \sum_{j \in N(i)} x_j(t) - \frac{a}{n-1} \sum_{j \in N(i)} x_i(t) + b(1 - x_i(t))\\
=\frac{a}{n-1} \sum_{j \in N(i)} x_j(t) - a x_i(t) + b - b x_i(t)\\
= \frac{a}{n-1} \sum_{j \in N(i)} x_j(t) + b - (a + b) x_i(t)\\
\end{align*}
A steady state would fulfill $\dot{x_i}(t)=0$.
Thus in steady state we have
\[
x_i \cdot (b + a) = \frac{a}{n-1} \sum_{j \in N(i)}x_j + b \quad \forall i \in [n].
\]
Thus, since the r.h.s. in all the equations above yields the same value, all the $x_i$ must have the same value, since for all $i$ and $j$ we have that $x_i(b+a)=x_j(b+a)$.
We now calculate said value $v$. We have $v (b+a) = a v + b$. Thus, $vb = b$ which implies $v=1$ using that $b>0$. 

Therefore, there is only one stable state where all agents have value 1. Otherwise, the derivative is not zero which means we are not in a steady state. 

\item Case 2: With a reasoning that is analogue to case 1 it follows that the only steady state where all agents are negative is the steady state $(-1, \ldots, -1)$.

\item Case 3: Here we aim to describe all steady states that have positive and negative entries (and might have additional entries that are 0). We will assume that $\epsilon>0$ is arbitrary small. 
First, note that we have a set of $n$ equations that are necessary for an equilibrium, where for agent $i$ we have
\begin{equation*}
\frac{a}{n-1}\cdot \sum_{j \in [n]\setminus \{ i\} } (x_j - x_i) + b(f(x_i)-x_i) = 0
\end{equation*}
Rewriting this gives
\begin{equation}\label{eq:eqi}
\frac{a}{n-1} \cdot \sum_{j \in [n] } x_j - (n-1) x_i  + b(f(x_i)-x_i) = 0
\end{equation}
Consider indices $i'$ and $j'$ such that $x_{i'} > 0$ and $x_{j'}<0$.
From \eqref{eq:eqi} it must hold that
\begin{equation}\label{eq:askd}  - a x_{i'}  + b(1-x_{i'}) = - a x_{j'}  + b(-1-x_{j'}). 
\end{equation}
This implies that for every positive value, there can only be one negative value that fulfills \eqref{eq:askd} since the degree of freedom is one.
Likewise, for every negative value there can only be one positive value.
Thus, there can only be one negative and one positive value.
Let $k^+$ be the number of agents with the positive value and $k^-$ the number of agents with the negative value. 
We now argue that these are the values $y^+$ and $y^-$.

For an agent with a  positive value we get
\begin{equation}\label{eq:as1} 
\frac{ak^-}{n-1}\left(y^- - y^+\right) + \frac{a(n - k^- - k^+ -1)}{n-1}\left(0 - y^+\right) + b(1 - y^+)=0
\end{equation}
and for an agent with a negative value we get
\begin{equation}\label{eq:as2}
\frac{ak^+}{n-1}\left(y^+ - y^-\right) + \frac{a(n - k^- - k^+ -1)}{n-1}\left(0 - y^-\right) + b(-1 - y^-) = 0.
\end{equation}
Finally, for an agent with value $0$, we have

\begin{equation}\label{eq:as3} \frac{ak^+}{n-1}(y^+) + \frac{ak^-}{n-1}(y^-) = 0.
\end{equation}

We now distinguish between three cases:
\begin{enumerate}
\item Consider the case $k^+ + k^- = n$. 
Using the values $y^+= \frac{2a k^+ - a n + b n}{a n + b n}$ and $y^-= \frac{-(a n - 2a k^+ + b n)}{a n + b n}$ we can verify that
\eqref{eq:as1} and  \eqref{eq:as2} hold.
Note that \eqref{eq:as3} is not require to hold since all non-zero, also in this case we have to replace $a(n - k^- - k^+)$ by $a(n - k^- - k^+-1)$ to exclude the agent herself.
This characterizes the states of case 2.
\item Consider the case $k^+ + k^- < n$ and $k^+=k^-$. This means that there are $n-k^+-k^-$ agents with a value of 0.Using the value $y^+=\frac{b}{b+a}$ for agents with positive value and and the value $y^-=-\frac{b}{b+a}$ for the agents with negative value, we see that  \eqref{eq:as1},  \eqref{eq:as2} and  \eqref{eq:as3} hold assuming. This characterizes the states of case 3.
\item  Consider the case $k^+ + k^- < n$ and $k^+ \neq k^-$. Plugging \eqref{eq:as3} into the other equations yields that the positive value must be $b/(b+a)$ and the negative value must be  $-b/(b+a)$. However, plugging these values into  \eqref{eq:as3}  and using that $b>0$ implies that $k^+=k^-$ which is a contradiction. Hence, this case cannot occur.
\end{enumerate}
The above characterizes all possible cases for PD.
\end{itemize}
\end{proof}
\subsection{Examples of Harmonizing Platform Influence Functions(HPIF)}
\label{sec:HPIF}
This section presents two examples of harmonizing platform influence functions. 
\subsubsection{The Linear Harmonizing Function}
Fix \(0 < \alpha < 1\), the first platform function is
\(    \label{eq:LinearHarmonizingFunction}
    f(x_i) = \alpha x_i
\)    
 Since $0 < \alpha < 1$, one can show (Theorem~\ref{thm:LinearHarmonizing}) that the system will always converge to the equilibrium $(0,0,\dots,0)$.
From the point of view of a platform such a function still keeps users on the platform, because the content they see, governed by $f(\cdot)$, is close to their own. However, it gradually pushes users towards a state of consensus.
It is a harmonizing function, therefore, it does not intersect with the polarization planes as shown in Figures \ref{fig:LinearHarmonizingFunction} and \ref{fig:LinearHarmonizingFunction2}.
To show that this function never intersects with the plane $G$. Lets consider \( x_i \in (0, 1] \) and \( x_j \in [-1,0) \) to intersects with $G$, the following equation has to hold true.
\(\alpha x_i = \frac{a(x_i - x_j)}{b} + x_i\)
Subtract \(x_i\) from both sides:
\((\alpha -1)x_i = \frac{a(x_i - x_j)}{b}\)
This can never be true as the left-hand side of the equation is always negative and right-hand side of the equation is always positive. 

\begin{theorem}
\label{thm:LinearHarmonizing}
Consider a clique of n agents and the following platform signal function. There is only a stable solution which is the state where all agents have value $0$, i.e., for all $i\in [n]$, $x_i=0$.
\( f(x_i) = \alpha x_i \)
\end{theorem}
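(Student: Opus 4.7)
The plan is to combine the general characterizations from Theorems~\ref{thm:AN} and~\ref{thm:BN} with a short extremal argument, and then upgrade uniqueness of the equilibrium to genuine asymptotic convergence via a Lyapunov function.

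First, I would apply Theorem~\ref{thm:AN} to $f(x)=\alpha x$. Any strong-consensus equilibrium $(y,\ldots,y)$ must satisfy $\alpha y=y$, which forces $y=0$ because $\alpha\in(0,1)$. Hence $(0,\ldots,0)$ is the unique strong-consensus equilibrium. Note also that $f$ satisfies the \properties{} (continuity, symmetry about $0$, sign preservation, and $|\alpha x|\le 1$ on $[-1,1]$), so Lemma~\ref{lem:planes_complete} and Theorem~\ref{thm:BN} are available.

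Next, I would rule out every other equilibrium by a short extremal argument. Summing the ODE over all $i$ causes the peer-influence terms to cancel pairwise (since $a_{ij}=a_{ji}$), leaving $\sum_i \dot x_i = -b(1-\alpha)\sum_i x_i$, so any equilibrium satisfies $\sum_i x_i=0$. Now let $i^\star$ attain the maximum of the $x_i$'s. Then $\bar x_{i^\star}\le x_{i^\star}$, and therefore
\[
G(x_{i^\star},\bar x_{i^\star},a,b)=\frac{(a+b)x_{i^\star}-a\bar x_{i^\star}}{b}\ \ge\ x_{i^\star}\ >\ \alpha x_{i^\star}=f(x_{i^\star})
\]
whenever $x_{i^\star}>0$. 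Lemma~\ref{lem:planes_complete} then gives $\dot x_{i^\star}<0$, contradicting equilibrium. So the maximum is $\le 0$; by the symmetric argument on the minimum, it is $\ge 0$; hence every $x_i=0$. This simultaneously rules out persistent disagreement, matching the failure of the necessary condition in Theorem~\ref{thm:BN}.

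Finally, to upgrade ``unique equilibrium'' to ``every trajectory converges to $0$'' (which is what ``only stable solution'' means), I would use the Lyapunov function $V(x)=\tfrac12\sum_i x_i^2$. A standard symmetrization of the cross term (the step that actually uses $a_{ij}=a_{ji}$) yields
\[
\dot V \;=\; -\frac{a}{n-1}\sum_{i<j}(x_i-x_j)^2 \;-\; b(1-\alpha)\sum_i x_i^2,
\]
which is strictly negative off the origin and satisfies $\dot V\le -2b(1-\alpha)V$. A Gronwall inequality (or LaSalle's invariance principle, invoked as in the proof of Theorem~\ref{thm:C}) then gives exponential convergence of every trajectory to $(0,\ldots,0)$, and invariance of $[-1,1]^n$ is automatic since $|f(x)|=\alpha|x|\le |x|\le 1$. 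The main obstacle is the symmetrization step in the Lyapunov computation: the double sum $\tfrac{a}{n-1}\sum_i x_i\sum_{j\neq i}(x_j-x_i)$ must be rewritten by swapping $i\leftrightarrow j$ to produce the pairwise squared differences; this is routine but is the one place where symmetry of $A$ is essential. Everything else is a direct appeal to the general results already proved in Section~\ref{sec:generalResultsCompleteGraph}.
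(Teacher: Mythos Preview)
Your core extremal step—pick the agent with the largest value, observe $\bar x_{i^\star}\le x_{i^\star}$, and derive the sign contradiction $(\alpha-1)x_{i^\star}$ versus $\tfrac{a}{b}(x_{i^\star}-\bar x_{i^\star})$—is exactly the argument the paper uses, so on the uniqueness part you and the paper agree. The averaging identity $\sum_i\dot x_i=-b(1-\alpha)\sum_i x_i$ that you insert beforehand is correct but not actually needed: the max/min argument alone already forces every coordinate to zero, and the paper omits this step.

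Where you genuinely go further is the Lyapunov part. The paper's proof stops at ``$(0,\ldots,0)$ is the unique equilibrium'' and says nothing about stability or convergence, even though the theorem is phrased in terms of a ``stable solution.'' Your computation $\dot V=-\tfrac{a}{n-1}\sum_{i<j}(x_i-x_j)^2-b(1-\alpha)\sum_i x_i^2\le -2b(1-\alpha)V$ is correct (the symmetrization step is standard and works exactly because all off-diagonal weights are equal, hence symmetric), and it delivers global exponential convergence—strictly more than the paper proves. So your route is the paper's argument plus a genuine strengthening; the price is one extra paragraph, and what it buys is that the word ``stable'' in the statement is actually justified.
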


\begin{proof}
Consider a complete graph \( K_n \) with \( n \) agents. The platform function is given by \( f(x_i) = \alpha x_i \), where \( 0 < \alpha < 1 \). To show that there is only one steady state we must show that this function does not intersect with the polarization plane \( G \), except when all agents converge to the only steady state, which is \( x_i = 0 \) for all \( i \in [n] \).

The evolution of the agent \( i \)'s with maximum opinion is governed by the following equation:
\[
\dot{x}_{i}(t) = \frac{a}{n-1} \sum_{j \in N(i)} \left( x_j(t) - x_i(t) \right) + b \left( f(x_i(t)) - x_i(t) \right),
\]
where \( N(i) \) denotes the set of all agents except \( i \). At equilibrium, where \( \dot{x}_i(t) = 0 \), this equation simplifies to:
\[
f(x_i(t)) = \frac{(a + b) x_i(t) - a \bar{x}_i(t)}{b},
\]
where \( \bar{x}_i = \frac{1}{n-1} \sum_{j \in N(i)} x_j \) is the average opinion of the neighboring agents of \( i \). 
To prove that \( f(x_i(t)) = \alpha x_i(t) \) does not intersect with the polarization plane \( G \) (except at \( x_i = 0 \)), we first express the polarization condition in terms of the function \( G(x_i(t), \bar{x}_i(t), a, b) \):
\[
G(x_i, \bar{x}_i, a, b) = \frac{(a + b) x_i - a \bar{x}_i}{b}.
\]
For equilibrium, this equation must satisfy:
\[
f(x_i) = G(x_i, \bar{x}_i, a, b).
\]
Substituting \( f(x_i) = \alpha x_i \), we get:
\[
\alpha x_i = \frac{(a + b) x_i - a \bar{x}_i}{b}.
\]
\[
(\alpha - 1) x_i = \frac{a (x_i - \bar{x}_i)}{b}.
\]
Since \( x_i > 0 \) and has the maximum value. The left-hand side \( (\alpha - 1) x_i \) is negative since \( \alpha < 1 \) and \( x_i > 0 \), while the right-hand side \( \frac{a (x_i - \bar{x}_i)}{b} \) is positive because \( x_i > \bar{x}_i \) (for neighbors with lower opinions). Thus, this equation cannot hold unless \( x_i = 0 \), as the left-hand side and right-hand side have opposite signs.
Hence, the platform function \( f(x_i) = \alpha x_i \) does not intersect with the polarization plane \( G \), except when all agents converge to the consensus state \( x_i = 0 \). This shows that the only stable equilibrium in the complete graph \( K_n \) is when all agents' opinions converge to \( 0 \).
\end{proof}
\vspace{-15pt} 
\begin{figure}[ht]
    \centering
    \begin{minipage}[b]{0.48\textwidth} 
        \includegraphics[width=\textwidth]{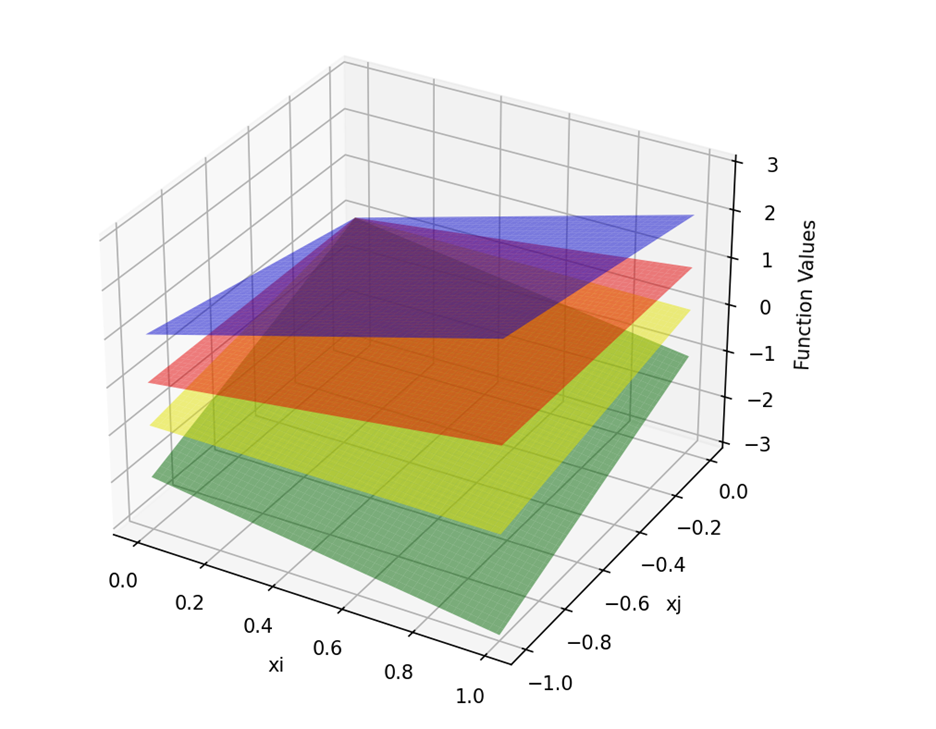}
        \caption{The Linear Harmonizing Function and Polarization Planes with $\alpha = 0.9$}
        \label{fig:LinearHarmonizingFunction}
    \end{minipage}
    \hfill 
    \begin{minipage}[b]{0.48\textwidth}
        \includegraphics[width=\textwidth]{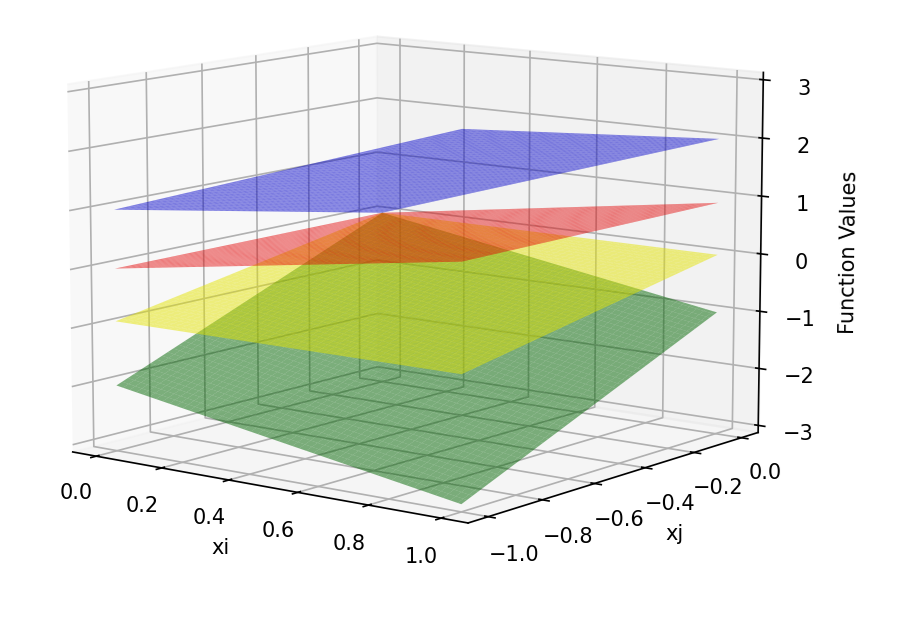}
        \caption{The Linear Harmonizing Function and Polarization Planes $\alpha = 0.9$ and changing the viewing angle}
        \label{fig:LinearHarmonizingFunction2}
    \end{minipage}
\end{figure}
\subsubsection{Anti Sign Function}
\label{AntiSignFunction}
This platform function presents the user with content that opposes their initial opinion. This can be beneficial in promoting diverse perspectives and reducing polarization. For example, if a user expresses agreement with a particular topic or viewpoint, the platform will show them content that presents an opposing viewpoint or opinion. This function can encourage users to consider alternative perspectives and make more informed decisions. In a two-agent system and some network influence $b>0$ and peer influence $a$, then according to Equation \ref{eq:difeq}, there is one stable state as shown in Figure \ref{fig:AntiSignFunction}. 
\begin{theorem}
\label{thm:antiSignFunction}
Consider a clique of $n$ agents and the $antiSgn_\epsilon(x_i)$ as the platform function. There is only a stable solution which is $(0,0,\dots, 0)$.
\end{theorem}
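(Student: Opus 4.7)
The plan is to show two things: that $(0,0,\ldots,0)$ is an equilibrium and the only one, and furthermore that every \opiniondynamic{} converges (exponentially) to it. Part one is immediate, since $antiSgn_\epsilon(0)=0$ makes both the peer-influence and platform terms of the ODE vanish at the origin.

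For uniqueness I intend to run a maximum/minimum-principle argument, in the spirit of Case~1 of Theorem~\ref{thm:signInClique} and using Lemma~\ref{lem:planes_complete}. Let $(x_1^*,\ldots,x_n^*)$ be an arbitrary equilibrium and let $i^*$ attain $\max_i x_i^*$. At equilibrium we must have $f(x_{i^*}^*)=\frac{(a+b)x_{i^*}^*-a\bar{x}_{i^*}^*}{b}$, and since $\bar{x}_{i^*}^*\le x_{i^*}^*$ the right-hand side is at least $x_{i^*}^*$. However, by construction of $antiSgn_\epsilon$ we have $f(y)\le 0$ whenever $y>0$; so if $x_{i^*}^*>0$ we would get $0\ge f(x_{i^*}^*)\ge x_{i^*}^*>0$, a contradiction. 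Hence $\max_i x_i^*\le 0$, and the symmetric argument applied to a minimizer yields $\min_i x_i^*\ge 0$. Together these force $x^*=0$.

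For convergence I will use the Lyapunov function $V(x)=\sum_i x_i^2$. A direct computation, collecting the peer term via the identity $\sum_{i\neq j}(x_i-x_j)^2=2n\sum_i x_i^2-2(\sum_i x_i)^2$, yields
\begin{equation*}
\dot V \;=\; -\frac{a}{n-1}\sum_{i\neq j}(x_i-x_j)^2 \;+\; 2b\sum_i\bigl(x_if(x_i)-x_i^2\bigr).
\end{equation*}
The first summand is non-positive. Because $antiSgn_\epsilon$ reverses signs (this is precisely why it violates sign preservation), $x_i f(x_i)\le 0$ for every $x_i$, including the intermediate linear regime $|x_i|\le\epsilon$ where $x_i f(x_i)=-x_i^2/\epsilon$. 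Therefore the second summand is bounded above by $-2b\sum_i x_i^2=-2bV$, giving $\dot V\le -2bV$. Gr\"onwall's inequality then yields $V(t)\le V(0)e^{-2bt}$, so every trajectory converges exponentially to the origin, which in particular establishes that $(0,\ldots,0)$ is the unique \emph{stable} equilibrium.

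The main obstacle I expect is bookkeeping in the Lyapunov computation --- in particular, verifying $x_i f(x_i)\le 0$ throughout the piecewise definition of $antiSgn_\epsilon$ (trivial in the saturated regimes $|x|>\epsilon$, and a one-line check in the linear regime). A smaller subtlety in the uniqueness argument is the equality case $\bar{x}_{i^*}^*=x_{i^*}^*$, which corresponds to all agents being equal; this collapses the equilibrium equation to $f(v)=v$, whose only solution for $antiSgn_\epsilon$ is $v=0$, so no gap remains.
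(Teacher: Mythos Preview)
Your uniqueness argument via the maximum/minimum principle is exactly what the paper does: its one-line proof defers to Theorem~\ref{thm:LinearHarmonizing}, whose proof is precisely the ``take the agent with maximal value, compare the two sides of $f(x_{i^*})=G(x_{i^*},\bar x_{i^*},a,b)$'' contradiction you spell out. So on that front you match the paper.

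Where you genuinely go beyond the paper is the Lyapunov step. The paper's statement and proof only concern uniqueness of the equilibrium; it never establishes convergence (and it cannot simply cite Theorem~3 of \cite{candogan2022social} here, since $antiSgn_\epsilon$ violates the sign-preservation hypothesis used throughout). Your $V(x)=\sum_i x_i^2$ computation is correct: the peer term collapses to $-\tfrac{a}{n-1}\sum_{i\neq j}(x_i-x_j)^2$, and since $antiSgn_\epsilon$ reverses sign (with the linear regime being $-x/\epsilon$, as continuity requires --- the paper's displayed formula has a typo there), $x_i f(x_i)\le 0$ everywhere, giving $\dot V\le -2bV$ and global exponential convergence. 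This is strictly stronger and cleaner than what the paper actually proves, at essentially no extra cost.
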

\begin{proof} 
The proof is similar to Theorem \ref{thm:LinearHarmonizing} where there is only one steady state. 
\end{proof}
\begin{figure}[ht]
\centering
\begin{minipage}{0.5\textwidth}
\centering
\includegraphics[width=\linewidth]{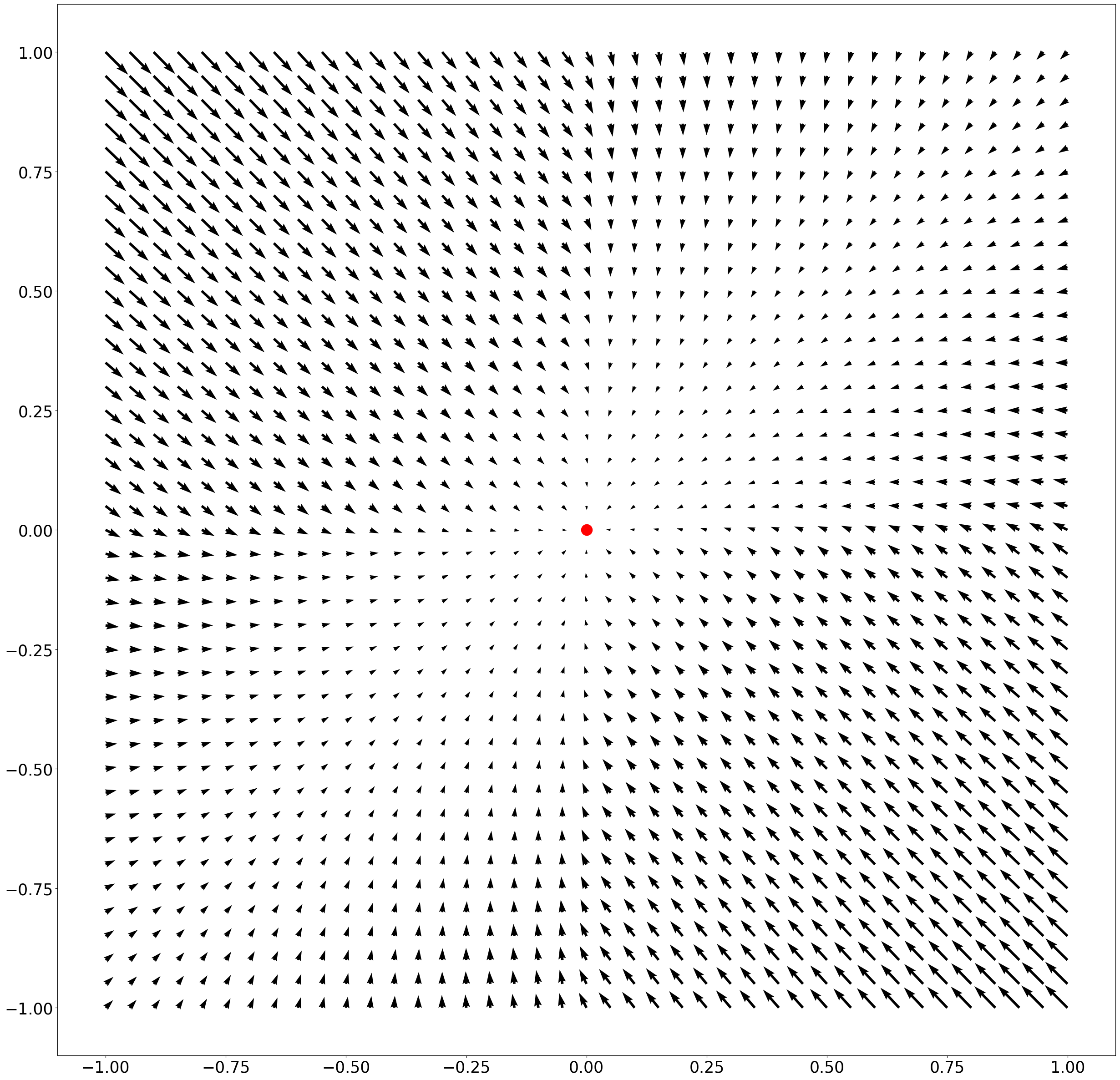}
\caption{Vector field plot of Anti Sign Function in two agents system showing all points are converging to the origin.}
\label{fig:AntiSignFunction}
\end{minipage}%
\hfill
\begin{minipage}{0.45\textwidth}
The anti-sign function is defined as the following. 
\[\label{eq:antiSignFunction}
antiSgn_\epsilon(x_i) =
\begin{cases}
    1 & \text{if } x_i < -\epsilon \\
    x_i/\epsilon & \text{if } -\epsilon \leq x \leq \epsilon \\
    -1 & \text{if } x >\epsilon
\end{cases}\]\\
Please note that the anti sign function violates the sign preservation property. Moreover, realistically speaking a platform would avoid showing a user constantly content that is so far from its opinion as it will likely cause the user to feel alienized and disengage with the platform.
\end{minipage}
\end{figure}
\clearpage
\section{Conclusion and Future Work}
\label{sec:Conclusion}
In this paper, we show how individual viewpoints shaped by social interactions are influenced by the platform by studying the platform influence function. We study when strong consensus and persistent disagreement is possible by giving the necessary and sufficient conditions for both. To achieve this we study a concept we coin polarisation planes that reveal interesting geometric properties. Moreover, we generalise these results to clique of n agents, also we proved that our results can be generalized to stochastic block model of two blocks. 
Studying the regions of attraction - meaning to which equilibrium state the system converges depending on the initial state - remains an interesting open question.
We also provide some functions that avoid persistent disagreement and it would be interesting to see if such functions help in the real world to combat polarization in social networks

\paragraph{Acknowledgements}
We are very grateful to Laura Vargas Koch and Víctor Verdugo for their help in writing the paper.

\clearpage
\bibliographystyle{plain} 
\bibliography{refs}    

\begin{thebibliography}{10}

\bibitem{allcott2020welfare}
Hunt Allcott, Luca Braghieri, Sarah Eichmeyer, and Matthew Gentzkow.
\newblock The welfare effects of social media.
\newblock {\em American Economic Review}, 110(3):629--676, 2020.

\bibitem{amelkin2017polar}
Victor Amelkin, Francesco Bullo, and Ambuj~K Singh.
\newblock Polar opinion dynamics in social networks.
\newblock {\em IEEE Transactions on Automatic Control}, 62(11):5650--5665, 2017.

\bibitem{auletta2023impact}
Vincenzo Auletta, Antonio Coppola, and Diodato Ferraioli.
\newblock On the impact of social media recommendations on consensus of discrete opinions.
\newblock 2023.

\bibitem{azzimonti2018social}
Marina Azzimonti and Marcos Fernandes.
\newblock Social media networks, fake news, and polarization.
\newblock Technical report, National Bureau of Economic Research, 2018.

\bibitem{bakshy2015exposure}
Eytan Bakshy, Solomon Messing, and Lada~A Adamic.
\newblock Exposure to ideologically diverse news and opinion on facebook.
\newblock {\em Science}, 348(6239):1130--1132, 2015.

\bibitem{banerjee1992simple}
Abhijit~V Banerjee.
\newblock A simple model of herd behavior.
\newblock {\em The quarterly journal of economics}, 107(3):797--817, 1992.

\bibitem{baumann2020modeling}
Fabian Baumann, Philipp Lorenz-Spreen, Igor~M Sokolov, and Michele Starnini.
\newblock Modeling echo chambers and polarization dynamics in social networks.
\newblock {\em Physical Review Letters}, 124(4):048301, 2020.

\bibitem{pew2018}
Kristen Bialik.
\newblock 14\% of americans have changed their mind about an issue because of something they saw on social media.
\newblock \url{https://www.pewresearch.org/short-reads/2018/08/15/14-of-americans-have-changed-their-mind-about-an-issue-because-of-something-they-saw-on-social-media/}, 2018.
\newblock Accessed: 2024-10-08.

\bibitem{bikhchandani1992theory}
Sushil Bikhchandani, David Hirshleifer, and Ivo Welch.
\newblock A theory of fads, fashion, custom, and cultural change as informational cascades.
\newblock {\em Journal of political Economy}, 100(5):992--1026, 1992.

\bibitem{blum2020foundations}
Avrim Blum, John Hopcroft, and Ravindran Kannan.
\newblock {\em Foundations of data science}.
\newblock Cambridge University Press, 2020.

\bibitem{boxell2017greater}
Levi Boxell, Matthew Gentzkow, and Jesse~M Shapiro.
\newblock Greater internet use is not associated with faster growth in political polarization among us demographic groups.
\newblock {\em Proceedings of the National Academy of Sciences}, 114(40):10612--10617, 2017.

\bibitem{burbach2020opinion}
Laura Burbach, Patrick Halbach, Martina Ziefle, and Andr{\'e} Calero~Valdez.
\newblock Opinion formation on the internet: the influence of personality, network structure, and content on sharing messages online.
\newblock {\em Frontiers in Artificial Intelligence}, 3:45, 2020.

\bibitem{candogan2022social}
Ozan Candogan, Nicole Immorlica, Bar Light, and Jerry Anunrojwong.
\newblock Social learning under platform influence: Consensus and persistent disagreement.
\newblock {\em arXiv preprint arXiv:2202.12453}, 2022.

\bibitem{cortes2008discontinuous}
Jorge Cortes.
\newblock Discontinuous dynamical systems.
\newblock {\em IEEE Control systems magazine}, 28(3):36--73, 2008.

\bibitem{dandekar2013biased}
Pranav Dandekar, Ashish Goel, and David~T Lee.
\newblock Biased assimilation, homophily, and the dynamics of polarization.
\newblock {\em Proceedings of the National Academy of Sciences}, 110(15):5791--5796, 2013.

\bibitem{degroot1974reaching}
Morris~H DeGroot.
\newblock Reaching a consensus.
\newblock {\em Journal of the American Statistical association}, 69(345):118--121, 1974.

\bibitem{falkenberg2022growing}
Max Falkenberg, Alessandro Galeazzi, Maddalena Torricelli, Niccol{\`o} Di~Marco, Francesca Larosa, Madalina Sas, Amin Mekacher, Warren Pearce, Fabiana Zollo, Walter Quattrociocchi, et~al.
\newblock Growing polarization around climate change on social media.
\newblock {\em Nature Climate Change}, 12(12):1114--1121, 2022.

\bibitem{flamino2023political}
James Flamino, Alessandro Galeazzi, Stuart Feldman, Michael~W Macy, Brendan Cross, Zhenkun Zhou, Matteo Serafino, Alexandre Bovet, Hern{\'a}n~A Makse, and Boleslaw~K Szymanski.
\newblock Political polarization of news media and influencers on twitter in the 2016 and 2020 us presidential elections.
\newblock {\em Nature Human Behaviour}, pages 1--13, 2023.

\bibitem{flaxman2016filter}
Seth Flaxman, Sharad Goel, and Justin~M Rao.
\newblock Filter bubbles, echo chambers, and online news consumption.
\newblock {\em Public opinion quarterly}, 80(S1):298--320, 2016.

\bibitem{friedkin1990social}
Noah~E Friedkin and Eugene~C Johnsen.
\newblock Social influence and opinions.
\newblock {\em Journal of Mathematical Sociology}, 15(3-4):193--206, 1990.

\bibitem{garcia2023effect}
David Garcia.
\newblock Effect of facebook's feed algorithm put to the test, 2023.

\bibitem{garcia2023influence}
David Garcia.
\newblock Influence of facebook algorithms on political polarization tested, 2023.

\bibitem{hartman2002ordinary}
Philip Hartman.
\newblock {\em Ordinary differential equations}.
\newblock SIAM, 2002.

\bibitem{HU2024128976}
Haibo Hu, Wenhao Chen, and Yixuan Hu.
\newblock Opinion dynamics in social networks under the influence of mass media.
\newblock {\em Applied Mathematics and Computation}, 482:128976, 2024.

\bibitem{lelkes2017hostile}
Yphtach Lelkes, Gaurav Sood, and Shanto Iyengar.
\newblock The hostile audience: The effect of access to broadband internet on partisan affect.
\newblock {\em American Journal of Political Science}, 61(1):5--20, 2017.

\bibitem{levy2021social}
Ro’ee Levy.
\newblock Social media, news consumption, and polarization: Evidence from a field experiment.
\newblock {\em American economic review}, 111(3):831--870, 2021.

\bibitem{perra2019modelling}
Nicola Perra and Luis~EC Rocha.
\newblock Modelling opinion dynamics in the age of algorithmic personalisation.
\newblock {\em Scientific reports}, 9(1):1--11, 2019.

\bibitem{pouso2012peano}
Rodrigo~L{\'o}pez Pouso.
\newblock Peano's existence theorem revisited.
\newblock {\em arXiv preprint arXiv:1202.1152}, 2012.

\bibitem{tornberg2021modeling}
Petter T{\"o}rnberg, Claes Andersson, Kristian Lindgren, and Sven Banisch.
\newblock Modeling the emergence of affective polarization in the social media society.
\newblock {\em PLoS One}, 16(10):e0258259, 2021.

\end{thebibliography}

\section{Appendix} 
\appendix
\section{Proof of Theorem in SBM}
\label{sec:appendix-SBM}
Here is the proof of the Theorem~\ref{thm:SBM} mentioned in the Section~\ref{sec:SBM}.
To prove Theorem ~\ref{thm:SBM}, we follow the structure of the proof used in the proof of Theorem 2 in~\cite{candogan2022social} and rely on similar Lemmas and notations, with slight modifications about using general platform function that satisfies the properties in~\ref{sec:properties} instead of the sign function.
\subsection*{Proof of Theorem \ref{thm:SBM}}
To establish Theorem \ref{thm:SBM}, we define \( \mathcal{L}_{n} \) as the set of agents in the left block and \( \mathcal{R}_{n} \) as the set of agents in the right block. For each agent \( i \), \( L_{n, i} \) represents the number of its left connections, while \( R_{n, i} \) represents the number of its right connections.

The proof of Theorem \ref{thm:SBM} consists of two lemmas.
The first lemma holds from lemma 4 in~\cite{candogan2022social}, the lemma states that for a large number of agents in the stochastic block model, the number of connections each agent has within their own block and with the other block is tightly concentrated around a predefined expected values. Specifically, as the network size $n$ increases, the probability that any agent’s number of connections diverges from these expected values is almost zero. This ensures that most agents has the expected range of connections, which is essential for proving the overall stability and predictability of the opinion dynamics in the network.
This lemma is identical to Lemma 4 in~\cite{candogan2022social}
\begin{lemma}
\label{connection-lemma}
Consider the sets
$$
S_{\delta, n} = [(1-\delta) p(n) n, (1+\delta) p(n) n] \text{ and } D_{\delta, n} = [(1-\delta) q(n) n, (1+\delta) q(n) n]
$$
Then
$$
\lim_{n \rightarrow \infty} \mathbb{P}(R_{n, i} \notin D_{\delta, n}, \forall i \in \mathcal{L}_{n}) = \lim_{n \rightarrow \infty} \mathbb{P}(L_{n, i} \notin S_{\delta, n}, \forall i \in \mathcal{L}_{n}) = 0
$$
and
$$
\lim_{n \rightarrow \infty} \mathbb{P}(R_{n, i} \notin S_{\delta, n}, \forall i \in \mathcal{R}_{n}) = \lim_{n \rightarrow \infty} \mathbb{P}(L_{n, i} \notin D_{\delta, n}, \forall i \in \mathcal{R}_{n}) = 0
$$
for all \( \delta \in (0, 1) \).
\end{lemma}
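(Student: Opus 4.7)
}
The plan is to read the four statements as the claim that, with probability tending to one, every agent's within-block and across-block degrees lie inside the expected bands; equivalently, the probability that \emph{some} agent has a degree outside its band tends to zero. Since each $L_{n,i}$ and $R_{n,i}$ is a sum of independent Bernoulli random variables (an agent's potential edges are independent by the definition of the SBM), the natural tool is a multiplicative Chernoff bound, applied separately to every agent and then combined by a union bound over the $\Theta(n)$ agents.

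Concretely, first I would fix an agent $i\in\mathcal{L}_n$ and observe that $R_{n,i}\sim\mathrm{Bin}(|\mathcal{R}_n|,q(n))$ with mean $\mu_R = |\mathcal{R}_n|\,q(n) = (1\pm o(1))\,q(n)n$, while $L_{n,i}\sim\mathrm{Bin}(|\mathcal{L}_n|-1,p(n))$ with mean $\mu_L = (1\pm o(1))\,p(n)n$. The multiplicative Chernoff bound then gives
\begin{equation*}
\mathbb{P}\bigl(R_{n,i}\notin D_{\delta,n}\bigr)\le 2\exp\!\bigl(-c_\delta\, q(n)n\bigr),
\qquad
\mathbb{P}\bigl(L_{n,i}\notin S_{\delta,n}\bigr)\le 2\exp\!\bigl(-c_\delta\, p(n)n\bigr),
\end{equation*}
for a constant $c_\delta>0$ depending only on $\delta$, provided $n$ is large enough that the $o(1)$ corrections to $\mu_R,\mu_L$ are absorbed into a slightly smaller $\delta'<\delta$. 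The same calculation with the roles of the blocks swapped covers agents $i\in\mathcal{R}_n$.

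Next I would bring in the density hypothesis $p(n),q(n)=\omega(\ln n/n)$. This gives $p(n)n,q(n)n=\omega(\ln n)$, so for every fixed $k$ we have $\exp(-c_\delta p(n)n)=o(n^{-k})$ and similarly for $q(n)$. A union bound over the $|\mathcal{L}_n|+|\mathcal{R}_n|=\Theta(n)$ agents then yields
\begin{equation*}
\mathbb{P}\bigl(\exists\, i\in\mathcal{L}_n: R_{n,i}\notin D_{\delta,n}\bigr)\le 2n\exp\!\bigl(-c_\delta q(n)n\bigr)\xrightarrow[n\to\infty]{}0,
\end{equation*}
and analogously for the other three statements, which is the claim.

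The only real obstacle is making sure the nominal means in the Chernoff bound match the midpoints of the bands $S_{\delta,n}$ and $D_{\delta,n}$. Because the true means are $(|\mathcal{L}_n|-1)p(n)$ and $|\mathcal{R}_n|q(n)$ rather than $p(n)n$ and $q(n)n$, the cleanest fix is to replace $\delta$ by a slightly smaller $\delta'$ and absorb the $1/n$ discrepancy, which is valid for all sufficiently large $n$ since $\delta$ is fixed and positive. Beyond this, the argument is standard and follows Lemma~4 of~\cite{candogan2022social} verbatim; no new ideas beyond Chernoff plus the density assumption are required.
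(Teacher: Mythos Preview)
Your proposal is correct and follows essentially the same approach as the paper: a multiplicative Chernoff bound on each agent's binomial degree, combined with a union bound over the $\Theta(n)$ agents and the density assumption $p(n),q(n)=\omega(\ln n/n)$ to kill the resulting $n\exp(-c_\delta p(n)n)$ term. If anything you are slightly more careful than the paper, which simply writes ``$n$ trials'' and does not discuss the $|\mathcal{L}_n|-1$ versus $n$ discrepancy that you explicitly absorb into a smaller $\delta'$.
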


\begin{proof} 
Let \( \delta \in (0, 1), n \geq 1 \) and consider a left agent \( i \in \mathcal{L}_{n} \). Note that the number of left connections \( L_{n, i} \) that agent \( i \) has is a binomial random variable with a probability of success \( p(n) \) and \( n \) trials. Applying Theorem 12.6 in~\cite{blum2020foundations} we have 
\begin{equation}
\mathbb{P}(L_{n, i} \notin [(1-\delta) p(n) n, (1+\delta) p(n) n]) \leq 3 \exp(-\delta^{2} p(n) n / 8) 
\end{equation}
Hence, using the union bound, we have
$$
\mathbb{P}(L_{n, i} \notin [(1-\delta) p(n) n, (1+\delta) p(n) n], \forall i \in \mathcal{L}_{n}) \leq 3 n \exp(-\delta^{2} p(n) n / 8)
$$
The fact that \( p(n) \) is \( \omega(\ln (n) / n) \) implies that \( 3 n \exp(-\delta^{2} p(n) n / 8) \) converges to 0 as \( n \rightarrow \infty \). To see this note that
$$
n \exp(-\delta^{2} p(n) n / 8) = \exp(\ln (n) (1 - \delta^{2} p(n) n / 8 \ln (n))) = n^{1 - \delta^{2} p(n) n / 8 \ln (n)}
$$
converges to 0 when \( n \) and \( p(n) n / \ln (n) \) converge to \( \infty \).
Similarly, we can deduce that
$$
\mathbb{P}(R_{n, i} \notin [(1-\delta) q(n) n, (1+\delta) q(n) n], \forall i \in \mathcal{L}_{n}) \leq 3 n \exp(-\delta^{2} q(n) n / 8)
$$
converges to 0 as \( n \rightarrow \infty \). The proof of the lemma for the right agents follows from the same arguments as the above arguments for the left agents.\end{proof}
Define the event \(C_{\delta, n}\) which is the set of all adjacency matrices such that every agent has between \( (1-\delta)p(n)n \) and \( (1+\delta)p(n)n \) connections within the same block, and between \( (1-\delta)q(n)n \) and \( (1+\delta)q(n)n \) connections with the different block.
\begin{equation}
\begin{aligned}
C_{\delta, n} = \{A_{n} \in \boldsymbol{A}_{n}: &\ R_{n, i} \in D_{\delta, n}, L_{n, i} \in S_{\delta, n}, \forall i \in \mathcal{L}_{n}, \\
& \text{and} \ R_{n, j} \in S_{\delta, n}, L_{n, j} \in D_{\delta, n}, \forall j \in \mathcal{R}_{n} \}
\end{aligned}
\end{equation}
The proof of Lemma \ref{connection-lemma}, along with the union bound, shows that for any \( \delta \in (0, 1) \), the probability that a randomly chosen adjacency matrix belongs to the set \( C_{\delta, n} \) approaches 1 as \( n \) grows. In other words, as \( n \) increases, it becomes almost certain that the adjacency matrices will satisfy the conditions defining \( C_{\delta, n} \).
For \( 0 < \delta < 1 \) and an integer \( n \), consider the following differential equations where \(f(.)\) is any platform function that follows the platform properties in~\ref{sec:properties}:
\begin{equation}
\begin{aligned}
& \dot{\bar{x}}_{L}(t) = a(\bar{x}_{R}(t) - \bar{x}_{L}(t)) \frac{(1+\delta) q(n)}{(1-\delta)(p(n) + q(n))} + b(f(\bar{x}_{L}(t)) - \bar{x}_{L}(t)) \\
& \underline{\dot{x}}_{L}(t) = a(\underline{x}_{R}(t) - \underline{x}_{L}(t)) \frac{(1-\delta) q(n)}{(1+\delta)(p(n) + q(n))} + b(f(\underline{x}_{L}(t)) - \underline{x}_{L}(t)) \\
& \dot{\bar{x}}_{R}(t) = a(\bar{x}_{L}(t) - \bar{x}_{R}(t)) \frac{(1-\delta) q(n)}{(1+\delta)(p(n) + q(n))} + b(f(\bar{x}_{R}(t)) - \bar{x}_{R}(t)) \\
& \underline{\dot{x}}_{R}(t) = a(\underline{x}_{L}(t) - \underline{x}_{R}(t)) \frac{(1+\delta) q(n)}{(1-\delta)(p(n) + q(n))} + b(f(\underline{x}_{R}(t)) - \underline{x}_{R}(t))
\end{aligned}
\label{eq:upper-lower-ODEs}
\end{equation}
with the initial conditions \( \bar{x}_{i}(0) = \max X_{i} \) and \( \underline{x}_{i}(0) = \min X_{i} \) for \( i = L, R \). In other words, \( \bar{x}_{R}(0) \) and \( \bar{x}_{L}(0) \) correspond to the maximum initial opinions for the right and left agents, respectively, while \( \underline{x}_{R}(0) \) and \( \underline{x}_{L}(0) \) represent the minimum initial opinions for the right and left agents. Since the existence of a solution to \( \boldsymbol{x}_{n, A_{n}, \boldsymbol{x}_{0}}(t) \) for any chosen adjacency matrix \( A_{n} \) and initial opinion vector \( \boldsymbol{x}_{0} \) is guaranteed from standard differential equation theory~\cite{cortes2008discontinuous}.
Let
$$
a_{1} = \frac{a(1-\delta) q(n)}{(1+\delta)(p(n) + q(n))} \text{ and } a_{2} = \frac{a(1+\delta) q(n)}{(1-\delta)(p(n) + q(n))}
$$
In the following Lemma \ref{lem:boundedODE}, it shows that the solutions of the ODEs given in \ref{eq:upper-lower-ODEs} bound the agents' opinions \( \boldsymbol{x}_{n, A_{n}, \boldsymbol{x}_{0}}(t) \) for adjacency matrices \( A_{n} \) that belong to \( C_{\delta, n} \) and initial opinions that belong to \( (X_{L} \times X_{R})^{n} \).
This lemma is inspired from Lemma 5 in~\cite{candogan2022social}, we just instead of using the sign function, we use \(f(.)\) which represent any platform function and it is shown to hold under the given proprieties in ~\ref{sec:properties}.
\begin{lemma}\label{lem:boundedODE}
Suppose that \( \bar{x}_{L}(t) < \underline{x}_{R}(t) \) for \( t \geq 0 \).\\
(i) We have \( \underline{x}_{R}(t) \leq \bar{x}_{R}(t) \) and \( \underline{x}_{L}(t) \leq \bar{x}_{L}(t) \) for all \( t \geq 0 \).\\
(ii) For all \( A_{n} \in C_{\delta, n}, \boldsymbol{x}_{0} \in (X_{L} \times X_{R})^{n} \) and all \( t \geq 0 \) we have
\begin{equation}
\underline{x}_{L}(t) \leq x_{i, n, A_{n}, \boldsymbol{x}_{0}}(t) \leq \bar{x}_{L}(t), \forall i \in \mathcal{L}_{n}, \text{ and } \underline{x}_{R}(t) \leq x_{j, n, A_{n}, \boldsymbol{x}_{0}}(t) \leq \bar{x}_{R}(t), \forall j \in \mathcal{R}_{n} 
\label{eq:main-inequality-equation}
\end{equation}
\end{lemma}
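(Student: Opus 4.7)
The plan is a standard ODE comparison argument carried out in two stages matching the two parts of the lemma, with the crucial use of the degree bounds encoded in $C_{\delta,n}$ and the Lipschitz continuity of $f$.

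\medskip

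\emph{Part (i).} I would argue by contradiction. Both inequalities $\underline{x}_L(0)\leq \bar{x}_L(0)$ and $\underline{x}_R(0)\leq \bar{x}_R(0)$ hold by the initial conditions $\underline{x}_i(0)=\min X_i \leq \max X_i=\bar{x}_i(0)$. Assuming a violation occurs, let $t^\star$ be the infimum of times at which at least one of the two inequalities fails. By continuity of the flow, at $t^\star$ an equality is attained in one of them, say $\bar{x}_L(t^\star)=\underline{x}_L(t^\star)=:v$, while $\underline{x}_R(t^\star)\leq \bar{x}_R(t^\star)$ still holds (the other case is symmetric). Subtracting the $\underline{x}_L$ and $\bar{x}_L$ ODEs at $t^\star$ kills the $b(f(v)-v)$ term and leaves
\[
\dot{\bar{x}}_L(t^\star)-\dot{\underline{x}}_L(t^\star)=a_2(\bar{x}_R(t^\star)-v)-a_1(\underline{x}_R(t^\star)-v).
\]
Using $\bar{x}_R(t^\star)\geq \underline{x}_R(t^\star)$ and $a_1<a_2$, I would lower bound this by $(a_2-a_1)(\underline{x}_R(t^\star)-v)$, which is strictly positive by the hypothesis $\bar{x}_L(t)<\underline{x}_R(t)$. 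This contradicts $t^\star$ being a first violation time, establishing (i).

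\medskip

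\emph{Part (ii).} I would prove the four envelope inequalities by a common first-violation argument. Fix $A_n\in C_{\delta,n}$ and $\boldsymbol{x}_0\in(X_L\times X_R)^n$. Define the nonnegativity witnesses $y_i^{L,+}(t)=\bar{x}_L(t)-x_i(t)$ for $i\in\mathcal{L}_n$ and the three analogues. At $t=0$ each is nonnegative by definition of the extremal initial conditions. Suppose some $y$-quantity becomes negative; let $t^\star$ be the first such time and, say, $y_i^{L,+}(t^\star)=0$ for a left agent $i$, so $x_i(t^\star)=\bar{x}_L(t^\star)$. By minimality, every other envelope inequality still holds at $t^\star$, so $x_j(t^\star)\leq \bar{x}_L(t^\star)$ for left neighbors and $x_j(t^\star)\leq \bar{x}_R(t^\star)$ for right neighbors. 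Splitting the sum in $\dot x_i$ over left and right neighbors and using these bounds with $|N(i)|=L_{n,i}+R_{n,i}$,
\[
\dot x_i(t^\star) \leq \frac{a\,R_{n,i}}{L_{n,i}+R_{n,i}}\bigl(\bar{x}_R(t^\star)-\bar{x}_L(t^\star)\bigr)+b\bigl(f(\bar{x}_L(t^\star))-\bar{x}_L(t^\star)\bigr).
\]
Because $A_n\in C_{\delta,n}$ I have $R_{n,i}\leq(1+\delta)q(n)n$ and $L_{n,i}+R_{n,i}\geq(1-\delta)(p(n)+q(n))n$, giving $\tfrac{aR_{n,i}}{L_{n,i}+R_{n,i}}\leq a_2$. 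Combined with $\bar{x}_R(t^\star)-\bar{x}_L(t^\star)>0$ (which follows from hypothesis and part (i): $\bar{x}_L<\underline{x}_R\leq \bar{x}_R$), this yields $\dot x_i(t^\star)\leq \dot{\bar{x}}_L(t^\star)$, i.e.\ $\dot y_i^{L,+}(t^\star)\geq 0$, contradicting the violation. The lower envelope for left agents uses the reverse inequalities $R_{n,i}\geq(1-\delta)q(n)n$ and $L_{n,i}+R_{n,i}\leq(1+\delta)(p(n)+q(n))n$ to produce the coefficient $a_1$; the two right-agent envelopes are entirely symmetric.

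\medskip

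\emph{Expected obstacle.} The main technical nuisance is that the first-violation argument gives $\dot y\geq 0$ only weakly, which does not by itself forbid $y$ from dipping below zero. I would therefore apply the standard perturbation trick: replace $\bar{x}_L$ by $\bar{x}_L+\eta e^{Kt}$ for a large constant $K$ depending on the Lipschitz constant of $f$ and on $a,b$, run the above argument to get strict inequality, and take $\eta\downarrow 0$ at the end (and analogously for $\underline{x}_L-\eta e^{Kt}$). The Lipschitz property of $f$, guaranteed by the platform function assumptions, is exactly what makes this perturbation work; monotonicity of $f$ is not needed at this level, since every critical evaluation of $f$ occurs at the common value of the envelope and the agent opinion.
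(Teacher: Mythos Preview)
Your argument is correct and follows the same overall strategy as the paper: both parts are first-violation contradiction arguments comparing derivatives at the touching time, using the degree bounds from $C_{\delta,n}$ to convert the neighbor-average influence into the coefficients $a_1,a_2$. The only substantive difference is in how part~(ii) is made rigorous: the paper quotes Hartman's comparison theorem (Theorem~4.1 in \cite{hartman2002ordinary}) to pass from the differential inequality $\dot x_i \le \bar U(t,x_i)$ to the envelope bound, whereas you do the equivalent $\eta e^{Kt}$ perturbation by hand; both are standard packagings of the same idea. Your remark that monotonicity of $f$ is not needed here---because at every comparison the $f$-terms are evaluated at the common touching value and cancel, so only Lipschitz continuity is used---is correct and in fact slightly sharper than the paper, which states that the comparison in part~(i) ``is possible because $f(\cdot)$ is monotonically increasing.''
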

\begin{proof}(i) Assume in contradiction that there exists a \( t \geq 0 \) such that \( \underline{x}_{R}(t) > \bar{x}_{R}(t) \) or \( \underline{x}_{L}(t) > \bar{x}_{L}(t) \). Note that \( t > 0 \). Let \( t_{1} = \inf \{ t \in [0, \infty) : \underline{x}_{R}(t) > \bar{x}_{R}(t) \text{ or } \underline{x}_{L}(t) > \bar{x}_{L}(t) \} \). By the contradiction assumption and the continuity of the solutions of the ordinary differential equations given in Equation ~\ref{eq:upper-lower-ODEs}, \( t_{1} \) is finite, and we have \( \underline{x}_{R}(t_{1}) = \bar{x}_{R}(t_{1}) \) and \( \underline{x}_{L}(t_{1}) \leq \bar{x}_{L}(t_{1}) \) or \( \underline{x}_{L}(t_{1}) = \bar{x}_{L}(t_{1}) \) and \( \underline{x}_{R}(t_{1}) \leq \bar{x}_{R}(t_{1}) \). Assume without loss of generality that \( \underline{x}_{R}(t_{1}) = \bar{x}_{R}(t_{1}) \) and \( \underline{x}_{L}(t_{1}) \leq \bar{x}_{L}(t_{1}) \). We have

$$
\begin{aligned}
\dot{\bar{x}}_{R}(t_{1}) & = a(\bar{x}_{L}(t_{1}) - \bar{x}_{R}(t_{1})) \frac{(1-\delta) q(n)}{(1+\delta)(p(n) + q(n))} + b(f(\bar{x}_{R}(t_{1})) - \bar{x}_{R}(t_{1})) \\
& > a(\underline{x}_{L}(t_{1}) - \underline{x}_{R}(t_{1})) \frac{(1+\delta) q(n)}{(1-\delta)(p(n) + q(n))} + b(f(\underline{x}_{R}(t_{1})) - \underline{x}_{R}(t_{1})) = \underline{\dot{x}}_{R}(t_{1})
\end{aligned}
$$

so that \( \underline{x}_{R}(t) < \bar{x}_{R}(t) \) for \( t > t_{1} \) that is close enough to \( t_{1} \).

From continuity, if \( \underline{x}_{L}(t_{1}) < \bar{x}_{L}(t_{1}) \) then \( \underline{x}_{L}(t) < \bar{x}_{L}(t) \) for \( t > t_{1} \) that is close enough to \( t_{1} \). If \( \underline{x}_{L}(t_{1}) = \bar{x}_{L}(t_{1}) \) then

$$
\begin{aligned}
\dot{\bar{x}}_{L}(t_{1}) & = a(\bar{x}_{R}(t_{1}) - \bar{x}_{L}(t_{1})) \frac{(1+\delta) q(n)}{(1-\delta)(p(n) + q(n))} + b(f(\bar{x}_{L}(t_{1})) - \bar{x}_{L}(t_{1})) \\
& > a(\underline{x}_{R}(t_{1}) - \underline{x}_{L}(t_{1})) \frac{(1-\delta) q(n)}{(1+\delta)(p(n) + q(n))} + b(f(\underline{x}_{L}(t_{1})) - \underline{x}_{L}(t_{1})) = \underline{\dot{x}}_{L}(t_{1})
\end{aligned}
$$

so that \( \underline{x}_{L}(t) < \bar{x}_{L}(t) \) for \( t > t_{1} \) that is close enough to \( t_{1} \). Hence, \( \underline{x}_{R}(t) < \bar{x}_{R}(t) \) and \( \underline{x}_{L}(t) < \bar{x}_{L}(t) \) for \( t > t_{1} \) that is close enough to \( t_{1} \), which is a contradiction to the definition of \( t_{1} \). Note that this comparison is possible because \( f(.)\)is monotonically increasing.
\begin{align*}
\dot{x}_{i, n, A_{n}, \boldsymbol{x}_{0}}(t) & = \frac{a}{|N(i)|} \sum_{j \in N(i)} (x_{j, n, A_{n}, \boldsymbol{x}_{0}}(t) - x_{i, n, A_{n}, \boldsymbol{x}_{0}}(t)) \\
& \quad + b(f(x_{i, n, A_{n}, \boldsymbol{x}_{0}}(t)) - x_{i, n, A_{n}, \boldsymbol{x}_{0}}(t)) \\
& \leq \frac{a}{|N(i)|} \sum_{j \in N(i) \cap \mathcal{L}_{n}} (x_{j, n, A_{n}, \boldsymbol{x}_{0}}(t) - x_{i, n, A_{n}, \boldsymbol{x}_{0}}(t)) \\
& \quad + b(f(x_{i, n, A_{n}, \boldsymbol{x}_{0}}(t)) - x_{i, n, A_{n}, \boldsymbol{x}_{0}}(t)) \\
& \leq \frac{a |N(i) \cap \mathcal{L}_{n}|}{|N(i)|} (\bar{x}_{L}(t) - x_{i, n, A_{n}, \boldsymbol{x}_{0}}(t)) \\
& \quad + b(f(x_{i, n, A_{n}, \boldsymbol{x}_{0}}(t)) - x_{i, n, A_{n}, \boldsymbol{x}_{0}}(t)) \\
& \leq a \frac{(1-\delta) q(n)}{(1+\delta)(p(n) + q(n))} (\bar{x}_{L}(t) - x_{i, n, A_{n}, \boldsymbol{x}_{0}}(t)) \\
& \quad + b(f(x_{i, n, A_{n}, \boldsymbol{x}_{0}}(t)) - x_{i, n, A_{n}, \boldsymbol{x}_{0}}(t))
\end{align*}

The first inequality holds because \( x_{i^{\prime}, n, A_{n}, \boldsymbol{x}_{0}}(t) - x_{i, n, A_{n}, \boldsymbol{x}_{0}}(t) \leq 0 \) for every right agent \( i^{\prime} \in \mathcal{R}_{n} \). The second inequality holds because \( x_{j, n, A_{n}, \boldsymbol{x}_{0}}(t) \leq \bar{x}_{L}(t) \) for every left agent \( j \in \mathcal{L}_{n} \). The third inequality holds because \( A_{n} \in C_{\delta, n} \) implies that the cardinality of \( N(i) \cap \mathcal{L}_{n} \) is at least \( (1-\delta) q(n) n \) and the cardinality of \( N(i) \) is at most \( (1+\delta)(p(n) + q(n)) n \) and because \( \bar{x}_{L}(t) < x_{i, n, A_{n}}(t) \).

We conclude that \( \dot{x}_{i, n, A_{n}}(t) \leq \bar{U}(t, x_{i, n, A_{n}}(t)) \) where

$$
\bar{U}(t, y) = a (\bar{x}_{L}(t) - y) \frac{(1-\delta) q(n)}{(1+\delta)(p(n) + q(n))} + b(f(y) - y)
$$

By applying Theorem 4.1 from \cite{hartman2002ordinary}, we can conclude that \( x_{i, n, A_{n}}(t) \leq \bar{x}_{R}(t) \) holds over the interval \( [t, t + \delta'] \), for some \( \delta' > 0 \) independent of \( t \). A similar reasoning shows that the remaining inequalities in the lemma (refer to Equation~\ref{eq:main-inequality-equation}) are valid on an interval \( [t, t + \delta''] \) for some \( \delta'' > 0 \). Hence,it follows that the inequalities in Lemma~\ref{lem:boundedODE} hold for all \( t \geq 0 \), thus completing the proof.

\end{proof}We now demonstrate that the two lemmas above lead to Theorem \ref{thm:SBM}.
Suppose first that the two-agent system \( (a\beta, b, (x_{L}, x_{R})) \) has a solution \( (x_{1}(t), x_{2}(t)) \), and an equilibrium \( e = (e_{L}, e_{R}) \) such that \( X_{L} \times X_{R} \subseteq X_{eq}(a\beta, b) \). The equilibrium is given by
\[
e = (e_{L}, e_{R}) = \left(\frac{a\beta(c + d) + b d}{b + 2a\beta}, \frac{a\beta(c + d) + b c}{b + 2a\beta}\right),
\]
where \( c = f(x_R) \) and \( d = f(x_L) \).
According to the theorem's assumptions, there exists a small \( \delta > 0 \) and a sufficiently large \( N \) such that for all \( n \geq N \) and any \( \epsilon' > 0 \), we have 
\[
a_1 := \frac{a(1-\delta)q(n)}{(1+\delta)(p(n) + q(n))} \geq a \beta - \epsilon', \quad a_2 := \frac{a(1+\delta)q(n)}{(1-\delta)(p(n) + q(n))} \leq a \beta + \epsilon'.
\]
These constants \( a_1 \) and \( a_2 \) represent slight perturbations around \( a\beta \), ensuring that the dynamics of the two blocks are close to those of the two-agent system as \( n \) increases. This implies that for sufficiently small \( \delta \) and large \( n \), we have \( X_{L} \times X_{R} \subseteq X_{eq}(a_{i}, b) \) for \( i = 1, 2 \).

The opinion dynamics for the larger system can be described by the following differential equations, analogous to those used in the two-agent model:
\begin{equation}
\label{eq:opinion_dynamics_SBM}
\begin{aligned}
\dot{\bar{x}}_{L}'(t) &= a_{1} (\bar{x}_{R}'(t) - \bar{x}_{L}'(t)) + b(f(\bar{x}_{L}'(t)) - \bar{x}_{L}'(t)), \\
\dot{\bar{x}}_{R}'(t) &= a_{2} (\bar{x}_{L}'(t) - \bar{x}_{R}'(t)) + b(f(\bar{x}_{R}'(t)) - \bar{x}_{R}'(t)).
\end{aligned}
\end{equation}
These differential equations approximate the evolution of opinions in the two-block stochastic block model, where the influence of neighbors is averaged over the large number of agents. The initial conditions are \( \bar{x}_{L}'(0) = \bar{x}_{L}(0) \) and \( \bar{x}_{R}'(0) = \bar{x}_{R}(0) \), ensuring \( (\bar{x}_{L}(0), \bar{x}_{R}(0)) \in X_{eq}(a_i, b) \) for \( i = 1, 2 \).

According to standard differential equation theory, the equilibrium of the system described by these equations is 
\[
(e_{L}', e_{R}') = \left( \frac{a_1 c' + b d' + a_2 d'}{a_1 + b + a_2}, \frac{a_1 c' + b c' + a_2 d'}{a_1 + b + a_2} \right),
\]
where \( f(\bar{x}_{R}'(t)) = c' \) and \( f(\bar{x}_{L}'(t)) = d' \). These equilibrium points \( (e_{L}', e_{R}') \) are perturbed versions of the two-agent system equilibrium, but their structure remains close to the original due to the small changes in \( a_1 \) and \( a_2 \).

Moreover, the solution to the ODEs converges to the equilibrium. Thus, for all \( \delta' > 0 \), there exists a \( T' > 0 \) such that
\[
|\bar{x}_{L}'(t) - e_{L}'| \leq \delta'/2, \quad |\bar{x}_{R}'(t) - e_{R}'| \leq \delta'/2 \quad \text{for all} \quad t \geq T',
\]
and for small \( \epsilon \), \( |e_{L} - e_{L}'| \leq \delta'/2 \) and \( |e_{R} - e_{R}'| \leq \delta'/2 \), implying that
\[
|\bar{x}_{L}'(t) - e_{L}| \leq \delta' \quad \text{and} \quad |\bar{x}_{R}'(t) - e_{R}| \leq \delta' \quad \text{for all} \quad t \geq T' \quad \text{and small enough} \quad \epsilon.
\]

Since we assume that all initial opinions converge to the equilibrium in the two-agent system, this implies that the paths \( \bar{x}_{L}(t) \) and \( \bar{x}_{R}(t) \) will converge to this specific equilibrium for all \( t \). Therefore, \( \bar{x}_{L}(t) \) and \( \bar{x}_{R}(t) \) satisfy the differential equations \eqref{eq:opinion_dynamics_SBM} for sufficiently small \( \epsilon > 0 \). Consequently, 
\[
|\bar{x}_{L}(t) - e_{L}| \leq \delta' \quad \text{and} \quad |\bar{x}_{R}(t) - e_{R}| \leq \delta' \quad \text{for all} \quad t \geq T'.
\]
A similar reasoning applies for initial opinions \( \underline{x}_{L}(0) \) and \( \underline{x}_{R}(0) \), showing that 
\[
|\underline{x}_{L}(t) - e_{L}| \leq \delta' \quad \text{and} \quad |\underline{x}_{R}(t) - e_{R}| \leq \delta' \quad \text{for all} \quad t \geq T' \quad \text{and sufficiently small} \quad \epsilon.
\]

Given that \( f(x) \) is a Lipschitz function, the differences \( |c - c'| \) and \( |d - d'| \) are bounded by some constant \( L \) multiplied by the differences in \( e_R \) and \( e_R' \) (or their perturbed counterparts). Thus, small changes in \( c \) and \( d \) lead to proportionally small changes in \( c' \) and \( d' \), ensuring that the difference between \( e \) and \( e' \) remains small, on the order of \( \epsilon \) or \( \delta \). This indicates that as \( n \) increases, the opinions of the agents get arbitrarily close to the equilibrium opinions \( e_L \) and \( e_R \), and the error in the approximation decreases to zero. Therefore, \( L \times \epsilon \) goes to zero as well since all agents within each block converge to the opinion of a single representative agent by Lemma \ref{lem:boundedODE}, the perturbation \( L\epsilon \) is uniformly controlled across all agents, ensuring that \( L\epsilon \to 0 \) uniformly as \( t \to \infty \).

By applying Lemma~\ref{lem:boundedODE}, we conclude that for any \( \epsilon > 0 \), there exist constants \( N \), \( T' \), and \( \delta > 0 \) such that for all \( t \geq T' \), \( A_{n} \in C_{\delta, n} \), and \( \boldsymbol{x}_{0} \in (X_{L} \times X_{R})^{n} \) with \( n \geq N \), we have 
\[
|x_{i, n, A_{n}, \boldsymbol{x}_{0}}(t) - e_L| \leq \epsilon \quad \text{for every left agent} \quad i, 
\]
and
\[
|x_{j, n, A_{n}, \boldsymbol{x}_{0}}(t) - e_R| \leq \epsilon \quad \text{for every right agent} \quad j.
\]
This concludes the proof of the theorem.

\end{document}